\newcommand{\ie}{\emph{i.e.}}
\newenvironment{pf}{\noindent\textbf{Proof.}\quad}{\hfill{$\Box$}}
\theoremstyle{plain}
\newtheorem{thm}{Theorem}[section]
\newtheorem{cor}[thm]{Corollary}
\newtheorem{lem}[thm]{Lemma}
\newtheorem{prop}[thm]{Proposition}
\theoremstyle{definition}
\newtheorem{defn}[thm]{Definition}
\newtheorem{rem}[thm]{Remark}
\newtheorem{ex}[thm]{Example}
\newcommand{\Fq}{\mathbb{F}_{q}}
\newcommand{\To}{\longrightarrow}
\newcommand{\Tr}{{\rm Tr}}
\newcommand{\Span}{{\rm Span}}
\newcommand{\C}{{\mathbb C}}
\newcommand{\E}{{\mathbb E}}
\newcommand{\F}{{\mathbb F}}
\newcommand{\G}{{\mathbb G}}
\newcommand{\HH}{{\mathbb H}}
\newcommand{\K}{\mathbb{K}}
\date{}
\begin{document}
\title{Quasi-Cyclic Codes}

\author{Cem G\"uneri, San Ling and Buket \"Ozkaya}
\maketitle

\chapter{Quasi-Cyclic Codes}

\section{Introduction}\label{intro}
Cyclic codes are among the most useful and well-studied code families for various reasons, such as effective encoding and decoding. These desirable properties are due to the algebraic structure of cyclic codes, which is rather simple. Namely, a cyclic code can be viewed as an ideal in a certain quotient ring obtained from a polynomial ring with coefficients from a finite field.

It is then natural for coding theorists to search for generalizations of cyclic codes, which also have nice properties. This chapter is devoted to one of the first such generalizations, namely the family of quasi-cyclic (QC) codes. Algebraically, QC codes are modules rather than ideals. As desired, QC codes turned out to be a very useful generalization and their investigation continues intensively today.

There is a vast literature on QC codes and it is impossible to touch upon all aspects in a chapter. This chapter is constrained to some of the most fundamental aspects to which the authors also made contributions over the years. We start with the algebraic structure of QC codes, present how the vectorial and algebraic descriptions are related to each other, and also discuss the code generators. Then we focus on the decomposition of QC codes, via the Chinese Remainder Theorem (CRT decomposition) and concatenation. In fact, these decompositions turn out to be equivalent in some sense, which is also mentioned in the chapter. Decomposition of a QC code yields a trace representation. Moreover, the dual QC code can be described in the decomposed form, which has important consequences of its own. For instance, self-dual and linear complementary dual QC codes can be characterized in terms of their CRT decompositions. Moreover, the asymptotic results presented in the chapter heavily rely on the decomposed structure of QC codes. Three different general minimum distance bounds on QC codes are presented here, with fairly complete proofs or detailed elaborations. A relation between QC codes and convolutional codes is also shown in the end.

\section{Algebraic Structure}\label{background}
Let $\Fq$ denote the finite field with $q$ elements, where $q$ is a prime power, and let $m$ and $\ell$ be two positive integers. A linear code $C$ of length $m\ell$ over $\Fq$ is called a \emph{quasi-cyclic (QC) code of index $\ell$} if it is invariant under shift of codewords by $\ell$ positions and $\ell$ is the minimal number with this property. Note that, if $\ell=1$, then $C$ is a cyclic code. If we view codewords of $C$ as $m \times \ell$ arrays as follows 

\begin{equation}\label{array}
\mathbf{c}=\left(
  \begin{array}{ccc}
    c_{00} & \ldots & c_{0,\ell-1} \\
    \vdots &  & \vdots \\
    c_{m-1,0} & \ldots & c_{m-1,\ell-1} \\
  \end{array}
\right),
\end{equation} 
then being invariant under shift by $\ell$ units amounts to being closed under the row shift where each row is moved downward one row and the bottom row moved to the top.

Consider the principal ideal $I=\langle x^m-1 \rangle$ of $\Fq[x]$ and define the quotient ring $R:=\Fq[x]/I$. If $T$ represents the shift-by-1 operator on $\Fq^{m\ell}$, then its action on $\mathbf{v}\in \F_q^{m\ell}$ will be denoted by $T\cdot \mathbf{v}$. Hence, $\Fq^{m\ell}$ has an
$\Fq[x]$-module structure given by the multiplication
\begin{eqnarray*}
\Fq[x] \times \F_{q}^{m\ell} & \longrightarrow & \F_{q}^{m\ell}\\
(a(x),\mathbf{v}) & \mapsto & a(T^\ell)\cdot \mathbf{v} .
\end{eqnarray*}
Note that, for $a(x)=x^m-1$, we have $a(T^\ell)\cdot \mathbf{v} = (T^{m\ell})\cdot \mathbf{v} - \mathbf{v}=0.$ Hence, the ideal $I$ fixes $\Fq^{m\ell}$ and we can view $\F_q^{m\ell}$ as an $R$-module. Therefore, a QC code
$C\subseteq \F_q^{m\ell}$ of index $\ell$ is an $R$-submodule of $\F_q^{m\ell}$.

To an element $\mathbf{c}\in \Fq^{m\times \ell} \simeq \Fq^{m\ell}$ as in (\ref{array}), we associate an element of $R^\ell$ as
\begin{equation*} \label{associate-1}
\mathbf{c}(x):=(c_0(x),c_1(x),\ldots ,c_{\ell-1}(x)) \in R^\ell ,
\end{equation*}
where, for each $0\leq j \leq \ell-1$, 
\begin{equation*}\label{columns} 
c_j(x):= c_{0,j}+c_{1,j}x+c_{2,j}x^2+\cdots +c_{m-1,j}x^{m-1} \in R .
\end{equation*} 
Thus, the following map is an $R$-module isomorphism:
\begin{equation}\begin{array}{lll} \label{identification-1}
\phi: \hspace{2cm} \F_q^{m\ell} & \longrightarrow & R^\ell  \\
\mathbf{c}=\left(
  \begin{array}{ccc}
    c_{00} & \ldots & c_{0,\ell-1} \\
    \vdots &  & \vdots \\
    c_{m-1,0} & \ldots & c_{m-1,\ell-1} \\
  \end{array}
\right) & \longmapsto & \mathbf{c}(x) .
\end{array}
\end{equation}
Note that, for $\ell=1$, this amounts to the classical polynomial representation of cyclic codes. Observe that the $T^\ell$-shift on $\F_q^{m\ell}$ corresponds to the componentwise multiplication by $x$ in $R^\ell$. Therefore, a $q$-ary QC code $C$ of length $m\ell$ and index $\ell$ can be considered as an $R$-submodule in $R^{\ell}$.

Lally and Fitzpatrick proved in \cite{LF} that every quasi-cyclic code, viewed as an $R$-submodule in $R^{\ell}$, has a generating set in the form of a reduced Gr{\"o}bner basis. In order to explain their findings, we need to fix some further notation first.

Consider the following ring homomorphism:
\begin{eqnarray*}
\Psi \ :\ \Fq[x]^{\ell} &\longrightarrow& R^{\ell} \\\nonumber
(f_0(x),f_1(x),\ldots, f_{\ell-1}(x))  &\longmapsto& (f_0(x)+I,f_1(x)+I,\ldots ,f_{\ell-1}(x)+I).
\end{eqnarray*}
Given a QC code $C\subseteq R^{\ell}$, it follows that the preimage $\Psi^{-1}(C)=\widetilde{C}$ of $C$ in $\Fq[x]^{\ell}$ is an $\Fq[x]$-submodule containing $\widetilde{K} =\{(x^m-1)\mathbf{e}_j : 0\leq j \leq \ell-1\}$, where $\mathbf{e}_j$ denotes the standard basis vector of length $\ell$ with 1 at the coordinate $j$ and 0 elsewhere. Throughout, the tilde~$\widetilde{\phantom{c}}$\ represents structures over $\Fq[x]$.

Since $\widetilde{C}$ is a submodule of the finitely generated free module $\Fq[x]^{\ell}$ over the principal ideal domain $\Fq[x]$ and it contains $\widetilde{K}$, it has a generating set of the form $$\{\mathbf{u}_1,\ldots,\mathbf{u}_p, (x^m-1)\mathbf{e}_0,\ldots,(x^m-1)\mathbf{e}_{\ell-1}\},$$  where $p$ is a nonnegative integer and $\mathbf{u}_b = (u_{b,0}(x),\ldots,u_{b,\ell-1}(x))\in \Fq[x]^{\ell}$ for each $b \in \{1,\ldots,p\}$. Hence, the rows of the matrix
$$M=\left(\begin{array}{ccc}
    u_{1,0}(x) & \ldots & u_{1,\ell-1}(x) \\
    \vdots &  & \vdots \\
    u_{p,0}(x) & \ldots & u_{p,\ell-1}(x) \\
     x^m-1 & \ldots & 0 \\
    \vdots & \ddots & \vdots \\
    0 & \ldots & x^m-1 \\
  \end{array}
\right)$$
generate $\widetilde{C}$. By using elementary row operations, we may triangularize $M$ so that another generating set can be obtained from the rows of an upper-triangular $\ell \times \ell$ matrix with entries in $\Fq[x]$ as follows:
\begin{equation}\label{gen matrix}
\widetilde{G}(x)=\left(\begin{array}{cccc}
    g_{00}(x) & g_{01}(x) & \ldots & g_{0,\ell-1}(x) \\
    0 & g_{11}(x) & \ldots & g_{1,\ell-1}(x) \\
    \vdots & \vdots & \ddots & \vdots \\
    0 & 0 &\ldots & g_{\ell-1,\ell-1}(x)\\
  \end{array}
\right),
\end{equation}
where $\widetilde{G}(x)$ satisfies the following conditions (see \cite[Theorem 2.1]{LF}):
\begin{enumerate}[leftmargin=*]
    \item $g_{i,j}(x)=0$, for all $0\leq j < i \leq \ell-1$.
    \item deg$(g_{i,j}(x)) < $ deg$(g_{j,j}(x))$, for all $i <j$.
    \item $g_{j,j}(x) \mid (x^m-1)$, for all $0\leq j \leq \ell-1$.
    \item If $g_{j,j}(x) = x^m-1$, then $g_{i,j}(x) =0$, for all $i\neq j$.
\end{enumerate}
Note that the rows of $\widetilde{G}(x)$ are nonzero and each nonzero codeword of $\widetilde{C}$ can be expressed in the form $(0,\ldots,0,c_j(x),\ldots,c_{\ell-1}(x)),$ where $j\geq 0$, $ c_j(x)\neq 0$ and $g_{j,j}(x)\,|\,c_j(x)$. This implies that the rows of $\widetilde{G}(x)$ form a Gr\"obner basis of $\widetilde{C}$ with respect to the position-over-term (POT) order in $\Fq[x]$, where the standard basis vectors $\{\mathbf{e}_0,\ldots,\mathbf{e}_{\ell-1}\}$ and the monomials $x^i$ are ordered naturally in each component. Moreover, the second condition above implies that the rows of $\widetilde{G}(x)$ form a reduced Gr\"obner basis of $\widetilde{C}$, which is uniquely defined up to multiplication by constants with monic diagonal elements.

Let $G(x)$ now be the matrix with the rows of $\widetilde{G}(x)$ under the image of the homomorphism $\Psi$. Clearly, the rows of $G(x):=\widetilde{G}(x) \mod I$ form an $R$-generating set for $C$. When $C$ is the zero code of length $m\ell$, we have $p=0$ which implies $G(x)=\mathbf{0}_{\ell\times\ell}$. Otherwise we say that $C$ is an $r$-generator QC code (generated as an $R$-submodule) if $G(x)$ has $r$ (nonzero) rows. The $\Fq$-dimension of $C$ is given by (see \cite[Corollary 2.4]{LF} for the proof) $$m\ell-\sum_{j=0}^{\ell-1}\mbox{deg}(g_{j,j}(x))=\sum_{j=0}^{\ell-1}\left[m -\mbox{deg}(g_{j,j}(x))\right].$$

\section{Decomposition of Quasi-Cyclic Codes}\label{DFT}
From this section on, we assume that $\gcd(m,q)=1$. 

\subsection{The Chinese Remainder Theorem and Concatenated Decompositions of QC Codes} \label{Concatenation}
We now describe the decomposition of a QC code over $\F_q$ into shorter codes over extension fields of $\F_q$. We follow the brief presentation in \cite{GO} and refer the reader to \cite{LS} for details. Let the polynomial $x^m-1$ factor into irreducible polynomials in $\F_q[x]$ as
\begin{equation}\label{irreducibles}
x^m-1=f_1(x)f_2(x)\ldots f_s(x).
\end{equation}
Since $m$ is relatively prime to $q$, there are no repeating factors in (\ref{irreducibles}). By the Chinese Remainder Theorem (CRT), we have the following ring isomorphism:
\begin{equation} \label{CRT-1}
R\cong \bigoplus_{i=1}^{s} \F_q[x]/\langle f_i(x)\rangle .
\end{equation}
Since each $f_i(x)$ divides $x^m-1$, its roots are powers of some fixed primitive $m^{th}$ root of unity $\xi$ in an extension field of $\Fq$. For each $i=1,2,\ldots,s$, let $u_i$ be the smallest nonnegative integer such that $f_i(\xi^{u_i})=0$. Since the $f_i(x)$'s are irreducible, the direct summands in (\ref{CRT-1}) are field extensions of $\F_q$. If $\E_i:=\F_q[x]/\langle f_i(x) \rangle$ for $1\leq i \leq s$, then we have
\begin{eqnarray} \label{CRT-2}
R & \cong & \E_1 \oplus \cdots \oplus \E_{s}
 \nonumber \\
a(x) & \mapsto & \left(a(\xi^{u_1}),\ldots ,a(\xi^{u_s}) \right).
\end{eqnarray}
This implies that
\begin{equation*} \label{CRT-3}
R^{\ell}\cong \E_1^{\ell} \oplus \cdots  \oplus \E_{s}^{\ell}.
\end{equation*}
Hence, a QC code $C\subseteq R^\ell$ can be viewed as an $(\E_1\oplus \cdots \oplus \E_{s})$-submodule of $\E_1^{\ell} \oplus\cdots  \oplus \E_{s}^{\ell}$ and decomposes as
\begin{equation} \label{constituents}
C\cong C_1\oplus \cdots  \oplus C_{s},
\end{equation}
where $C_i$ is a linear code of length $\ell$ over $\E_i$, for each $i$. These length $\ell$ linear codes over various extensions of $\F_q$ are called the \emph{constituents} of $C$. Let $C\subseteq R^\ell$ be $r$-generated as an $R$-module by
$$\{\bigl(a_{1,0}(x),\ldots ,a_{1,\ell-1}(x)\bigr),\ldots ,
\bigl(a_{r,0}(x),\ldots ,a_{r,\ell-1}(x)\bigr)\} \subset R^\ell.$$
Then, for $1\leq i \leq s$, we have
\begin{equation}\label{explicit constituents}
 C_i  =  \Span_{\E_i}\bigl\{\bigl(a_{b,0}(\xi^{u_i}),\ldots
,a_{b,\ell-1}(\xi^{u_i})\bigr): 1\leq b \leq r \bigr\}.   
\end{equation}

Note that each extension field $\E_i$ above is isomorphic to a minimal cyclic code of length $m$ over $\F_q$, namely, the cyclic code whose check polynomial is $f_i(x)$. If we denote by $\theta_i$ the generating primitive idempotent for the minimal cyclic code in consideration, then the isomorphism is given by the maps
\begin{eqnarray} \label{isoms}
\begin{array}{ccc} \varphi_i:\langle \theta_i \rangle
& \longrightarrow & \E_i \\ \hspace{0.5cm} a(x)& \longmapsto &
a(\xi^{u_i}) \end{array}
& \rm{and}& \begin{array}{ccc} \psi_i: \E_i & \longrightarrow & \langle \theta_i \rangle \\
\hspace{0.5cm} \delta & \longmapsto & \sum\limits_{k=0}^{m-1} a_kx^k
\end{array}\ \ ,
\end{eqnarray}
where
$$a_k=\frac{1}{m} \Tr_{\E_i/\F_q}(\delta\xi^{-ku_i}).$$
Here, $\Tr_{\E_i/\F_q}$ denotes the trace map from $\E_i$ onto $\F_q$. If $[\E_i:\F_q]=e_i$, then $\Tr_{\E_i/\F_q}(x)= x+x^q+\cdots+x^{q^{e_i-1}}$. If $C_i$ is a length $\ell$ linear code over $\E_i$, we denote its concatenation with $\langle \theta_i \rangle$ by $\langle \theta_i \rangle \Box C_i$ and the concatenation is carried out by the map $\psi_i$, extended to $\E_i^\ell$. In other words, $\psi_i$ is applied to each symbol of the codeword in $C_i$ to produce an element of $\langle \theta_i\rangle ^\ell$.

Jensen gave the following concatenated description for QC codes.
\begin{thm} [{\cite{J}}] \label{Jensen's thm}\hfill
\begin{itemize}[leftmargin=*]
\item[i.] Let $C$ be an $R$-submodule of $R^\ell$ (i.e., a QC code). Then for some subset $\mathcal{I}$ of $\{1,\ldots ,s\}$, there exist linear codes $C_i$ of length $\ell$ over $\E_i$, which can be explicitly described, such that $$C=\displaystyle\bigoplus_{i\in \mathcal{I}} \langle \theta_i \rangle \Box C_i.$$

\item[ii.] Conversely, let $C_i$ be a linear code in $\E_i^\ell$, for each $i\in \mathcal{I}\subseteq \{1,\ldots,s\}.$ Then, $$C=\displaystyle\bigoplus_{i\in \mathcal{I}} \langle \theta_i \rangle \Box C_i$$ is a $q$-ary QC code of length $m\ell$ and index $\ell$.
\end{itemize}
\end{thm}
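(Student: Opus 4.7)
The plan is to deduce both parts of Jensen's theorem from the CRT decomposition (\ref{CRT-2}) together with the maps $\varphi_i,\psi_i$ in (\ref{isoms}). The key structural fact I would set up first is the internal decomposition $R=\bigoplus_{i=1}^{s}\langle\theta_i\rangle$, obtained from the primitive orthogonal idempotents $\theta_i$, under which (\ref{CRT-2}) restricts on the $i$-th summand to the evaluation map $\varphi_i:\langle\theta_i\rangle\to\E_i$, with inverse $\psi_i$. Extending componentwise gives an internal decomposition $R^\ell=\bigoplus_{i=1}^{s}\langle\theta_i\rangle^\ell$ and an $\F_q$-linear identification $R^\ell\cong\bigoplus_{i=1}^{s}\E_i^\ell$ that is compatible with the $x$-action: componentwise multiplication by $x$ on $\langle\theta_i\rangle^\ell$ corresponds, via $\varphi_i$, to componentwise scalar multiplication by $\varphi_i(x)=\xi^{u_i}\in\E_i$ on $\E_i^\ell$.

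For part (i), I would take the CRT decomposition $C\cong\bigoplus_{i=1}^{s}C_i$ of (\ref{constituents}), with each $C_i\subseteq \E_i^\ell$ given explicitly by (\ref{explicit constituents}), and set $\mathcal{I}=\{i : C_i\neq 0\}$. Using the identification above, every $\mathbf{c}\in R^\ell$ has a unique expression $\mathbf{c}=\sum_{i=1}^{s}\psi_i(v_i)$ with $v_i\in \E_i^\ell$, and $\mathbf{c}\in C$ if and only if $v_i\in C_i$ for every $i$. Equivalently, $\mathbf{c}\in C$ if and only if each component $\psi_i(v_i)$ lies in $\psi_i(C_i)=\langle\theta_i\rangle\,\Box\, C_i$, yielding $C=\bigoplus_{i\in\mathcal{I}}\langle\theta_i\rangle\,\Box\, C_i$.

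For part (ii), it suffices to show that each $\langle\theta_i\rangle\,\Box\, C_i$ is itself an $R$-submodule of $R^\ell$, as the (internal) direct sum over $i\in\mathcal{I}$ will then inherit the module structure. $\F_q$-linearity is immediate from the $\F_q$-linearity of $\psi_i$. For stability under multiplication by $x$, I would use the intertwining property above: since $C_i$ is $\E_i$-linear, $\xi^{u_i}C_i\subseteq C_i$, and therefore $x\cdot\psi_i(C_i)\subseteq \psi_i(C_i)$. Hence $\langle\theta_i\rangle\,\Box\, C_i$ is an $R$-submodule of $R^\ell$, making $C$ a QC code of length $m\ell$ and index $\ell$.

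The main step requiring care is the compatibility claim underpinning everything, namely that the CRT map (\ref{CRT-2}), restricted to the minimal cyclic code $\langle\theta_i\rangle\subseteq R$, coincides with $\varphi_i$ and admits $\psi_i$ as its two-sided inverse; equivalently, that $\psi_i\circ\varphi_i$ is the idempotent projection $a(x)\mapsto \theta_i\, a(x)$. This is the content of the trace (inverse-DFT) formula defining $\psi_i$ in (\ref{isoms}), and is the one genuinely computational point in the argument. Once this is in place, both directions of the theorem reduce to routine bookkeeping through the induced isomorphism on $R^\ell$.
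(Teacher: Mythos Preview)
The paper does not supply its own proof of this theorem; it is stated with a citation to Jensen \cite{J}, and the surrounding exposition only sets up the maps $\varphi_i,\psi_i$ in (\ref{isoms}) and then remarks (citing \cite[Theorem 4.1]{GO}) that the outer codes in the concatenated description coincide with the CRT constituents in (\ref{constituents}). Your argument is correct and is precisely the derivation that the paper's setup is meant to support: you use the idempotent decomposition $R=\bigoplus_i\langle\theta_i\rangle$, identify each summand with $\E_i$ via $\varphi_i$, and read off both directions from the resulting internal decomposition of $R^\ell$. In effect you have written out the proof that the paper defers to \cite{J} and \cite{GO}, so there is nothing to contrast; your approach matches the paper's intended route.
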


It was proven in \cite[Theorem 4.1]{GO} that, for a given QC code $C$, the constituents $C_i$'s in (\ref{constituents}) and the outer codes $C_i$'s in the concatenated structure of Theorem~\ref{Jensen's thm} are equal to each other.

\subsection{Applications}\label{DFT application}

In this section, we present some constructions and characterizations of QC codes using their CRT decomposition.

\subsubsection{Trace Representation}\label{trace section}
By (\ref{isoms}) and Theorem \ref{Jensen's thm}, an arbitrary codeword $\mathbf{c}\in C$ can be written as an $m\times \ell$ array in the form (see \cite[Theorem 5.1]{LS})
\begin{equation}\label{QC tr}
\mathbf{c}=\frac{1}{m}\left(\begin{array}{c}\left(\sum\limits_{i=1}^{s}\Tr_{\E_{i}/\F_q}\left(\lambda_{i,t}\xi^{-0u_{i}} \right) \right)_{0\leq t \leq
\ell-1} \\
 \vdots \\
\left(\sum\limits_{i=1}^{s}\Tr_{\E_{i}/\F_q}\left(\lambda_{i,t}\xi^{-(m-1)u_{i}}
\right) \right)_{0\leq t \leq \ell-1}
\end{array} \right),
\end{equation} 
where $\lambda_i=(\lambda_{i,0},\ldots ,\lambda_{i,\ell-1}) \in C_i$, for all $i$. Since $mC=C$, every codeword in $C$ can still be written in the form of (\ref{QC tr}) with the constant $\frac{1}{m}$ removed. Note that the row shift invariance of codewords amounts to being closed under multiplication by $\xi^{-1}$ in this representation.

Let $\F:=\Fq(\xi^{u_1},\ldots,\xi^{u_s})$ be the splitting field of $x^m-1$ (\ie, the smallest field containing all the $\E_i$'s) and let $k_1,\ldots, k_s \in\F$ with $\Tr_{\F/\E_{i}}(k_i)=1$, for each $i$. We can now unify the traces and rewrite $\mathbf{c} \in C$ as follows (see \cite{GO2}):
\begin{equation}\label{QC trace} \mathbf{c}=\left(\left(\Tr_{\F/\F_q}\left(\sum\limits_{i=1}^{s}k_i\lambda_{i,t}\xi^{-j u_{i}}
\right) \right)_{\substack{0\leq t \leq \ell-1\\ 0 \leq j \leq m-1}
} \right).\end{equation} 
Note that the case $\ell=1$ gives us the trace representation of a cyclic code of length $m$, in the sense of the following formulation:

\begin{thm} [{\cite[Proposition 2.1]{W}}] \label{Wolfmann}
Let $\gcd(m,q)=1$ and let $\xi$ be a primitive $m^{th}$ root of unity in some extension $\F$ of $\F_q$. Assume that $u_1,\ldots ,u_s$ are nonnegative integers and let $C$ be a $q$-ary cyclic code of length $m$ such that the basic zero set of its dual is $BZ(C^\perp)=\left\{\xi^{u_1},\xi^{u_2},\ldots ,\xi^{u_s} \right\}$ (\ie, the generator polynomial of $C^\perp$ is $\prod_{i} m_{i}(x),$ where $m_{i}(x)\in \F_q[x]$ is the minimal polynomial of $\xi^{u_i}$ over $\F_q$). Then
\begin{equation*} \label{trace for cyclic}
C=\bigl\{\left(\Tr_{\F/\F_q}(c_1\xi^{j u_1}+\cdots + c_s\xi^{j u_s}
)\right)_{0\leq j \leq m-1}: c_1,\ldots ,c_s \in \F  \bigr\}.
\end{equation*}
\end{thm}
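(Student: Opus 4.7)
The plan is to specialize the QC trace representation (\ref{QC trace}) to index $\ell=1$, since a QC code of index one is a cyclic code of length $m$. Three things must be done: (i) identify the CRT constituents of $C$ from the hypothesis on $BZ(C^\perp)$; (ii) substitute into (\ref{QC trace}) and align the exponents with those in the theorem; (iii) show that allowing the coefficients $c_i$ to range over $\F$ produces the same codeword set as the natural parametrization by $\lambda_i\in\E_i$.

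For (i), since $\gcd(m,q)=1$, the generator polynomial of $C^\perp$ is, up to a nonzero scalar, the reciprocal of the check polynomial $h_C(x)$ of $C$, so its roots are the inverses of those of $h_C$. The hypothesis $BZ(C^\perp)=\{\xi^{u_1},\ldots,\xi^{u_s}\}$ therefore identifies the basic nonzeros of $C$ with the $q$-cyclotomic cosets of $\xi^{-u_1},\ldots,\xi^{-u_s}$. Accordingly, exactly $s$ of the irreducible factors in (\ref{irreducibles}) divide $h_C$; relabel them as $f_1,\ldots,f_s$ and let $v_i$ be the chapter's representative with $f_i(\xi^{v_i})=0$. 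Then $v_i$ lies in the same $q$-cyclotomic coset as $-u_i$, and the CRT constituents of $C$ are $C_i=\E_i$ for $1\le i\le s$ and $C_i=0$ otherwise.

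For (ii), plugging into (\ref{QC trace}) with $\ell=1$ yields
$$\mathbf{c}=\left(\Tr_{\F/\F_q}\left(\sum_{i=1}^{s}k_i\lambda_i\xi^{-j v_i}\right)\right)_{0\le j\le m-1},\qquad\lambda_i\in\E_i.$$
By trace transitivity together with $\xi^{-j v_i}\in\E_i$ and $\Tr_{\F/\E_i}(k_i)=1$, this equals $(\sum_{i=1}^s\Tr_{\E_i/\F_q}(\lambda_i\xi^{-j v_i}))_j$, which is invariant under replacing $v_i$ by any other element of its $q$-cyclotomic coset: indeed, $\Tr_{\E_i/\F_q}(\lambda_i(\xi^{-jv_i})^{q^t})=\Tr_{\E_i/\F_q}(\lambda_i^{q^{-t}}\xi^{-jv_i})$ and $\lambda_i\mapsto\lambda_i^{q^{-t}}$ is a bijection on $\E_i$. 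Using $-u_i$ as representative transforms $\xi^{-jv_i}$ into $\xi^{j u_i}$.

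For (iii), since $\xi^{u_i}\in\F_q(\xi^{-u_i})=\E_i$, also $\xi^{j u_i}\in\E_i$, so for any $c_i\in\F$ trace transitivity gives
$$\Tr_{\F/\F_q}(c_i\xi^{j u_i})=\Tr_{\E_i/\F_q}\bigl(\Tr_{\F/\E_i}(c_i)\,\xi^{j u_i}\bigr).$$
As $c_i$ varies over $\F$, $\Tr_{\F/\E_i}(c_i)$ varies over all of $\E_i$ by surjectivity of the trace; hence $\{(\Tr_{\F/\F_q}(\sum_i c_i\xi^{j u_i}))_j:c_i\in\F\}$ equals the codeword set obtained above, which is $C$. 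The main obstacle is (i), the careful bookkeeping of the inversion of roots and the choice of cyclotomic coset representatives in translating $BZ(C^\perp)$ into the CRT indexing used by (\ref{QC trace}); once this is done, (ii) is a Frobenius reparametrization and (iii) is a direct application of trace transitivity and surjectivity of $\Tr_{\F/\E_i}$.
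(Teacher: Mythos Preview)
The paper does not give a formal proof of this theorem; it merely cites Wolfmann and remarks, just before the statement, that ``the case $\ell=1$ gives us the trace representation of a cyclic code of length $m$.'' Your proposal carries out exactly this specialization, and the bookkeeping in (i)--(iii) is correct: the identification of the nonzero constituents via the check polynomial of $C$, the Frobenius reparametrization that swaps the chapter's root representative $v_i$ for $-u_i$, and the passage from $\lambda_i\in\E_i$ to $c_i\in\F$ via surjectivity of $\Tr_{\F/\E_i}$ are all sound. So your approach is precisely what the paper intends, with the details filled in.
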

Hence, the columns of any codeword in the QC code $C$ viewed as in (\ref{QC trace}) lie in the cyclic code $D\subseteq\Fq^m$ with $\mbox{BZ}(D^\perp)=\left\{\xi^{-u_1},\ldots ,\xi^{-u_s} \right\}$ (see \cite[Proposition 4.2]{GO2}).

\begin{ex}\label{cubic cons}
Let $m=3$ and $q \equiv 2 \pmod 3$ such that $x^3-1$ factors into $x-1$ and $x^2+x+1$ over $\Fq$. By (\ref{CRT-1}) and (\ref{CRT-2}), we obtain $$R=\F_q[x]/\langle x^3-1\rangle\cong  \F_q[x]/\langle x-1\rangle \oplus \F_q[x]/\langle x^2+x+1\rangle \cong \F_q \oplus \F_{q^2}.$$
Therefore, any QC code $C$ of length $3\ell$ and index $\ell$ has two linear constituents $C_1 \subseteq \Fq^{\ell}$ and $C_2 \subseteq \F_{q^2}^{\ell}$ such that $C\cong C_1 \oplus C_2$. By (\ref{QC trace}), the codewords of $C$ are of the form (see \cite[Theorem 6.7]{LS}) 
$$\left(\begin{array}{c} \mathbf{z}+2\mathbf{a}-\mathbf{b}\\
\mathbf{z}-\mathbf{a}+2\mathbf{b}\\
\mathbf{z}-\mathbf{a}-\mathbf{b}\end{array} \right),$$ where $\mathbf{z} \in C_1$, $\mathbf{a}+\beta \mathbf{b} \in C_2$ ($\mathbf{a},\mathbf{b} \in \Fq^{\ell}$) and $\beta\in\F_{q^2}$ such that $\beta^2+\beta+1=0$.

In particular, let $q=2^t$, where $t$ is odd (otherwise, $2^t\equiv 1\hspace{-4pt} \pmod 3$ if $t$ is even and $x^3-1$ splits into linear factors). By using the fact $2\mathbf{a}=2\mathbf{b}=\mathbf{0}_{\ell}$ in even characteristic, we can simplify the above expression further and write $C$ as 
\begin{equation} \label{cubic}
C=\left\{
\left(\begin{array}{c}\mathbf{z}+\mathbf{b}\\
\mathbf{z}+\mathbf{a}\\
\mathbf{z}+\mathbf{a}+\mathbf{b}\end{array} \right)\ :\ \mathbf{z} \in C_1, \mathbf{a}+\beta \mathbf{b} \in C_2\right\}.
\end{equation}
\end{ex}

\begin{ex}
Let $m=5$ and let $q$ be such that $x^4+x^3+x^2+x+1$ is irreducible over $\Fq$. Let $\alpha\in\F_{q^4}$ such that $\alpha^4+\alpha^3+\alpha^2+\alpha+1=0$, and let $\Tr=\Tr_{\F_{q^4}/\Fq}$. Let $C_1 \subseteq \Fq^{\ell}$ and $C_2 \subseteq \F_{q^4}^{\ell}$ be two linear codes. Then the code (see \cite[Theorem 6.14]{LS}) 
$$C=\left\{\left(\begin{array}{c} 
\mathbf{z}+\Tr(\mathbf{y})\\
\mathbf{z}+\Tr(\alpha^{-1}\mathbf{y})\\
\mathbf{z}+\Tr(\alpha^{-2}\mathbf{y})\\
\mathbf{z}+\Tr(\alpha^{-3}\mathbf{y})\\
\mathbf{z}+\Tr(\alpha^{-4}\mathbf{y})\end{array} \ :\ \mathbf{z} \in C_1,\mathbf{y} \in C_2\right)\right\}$$ is a QC code of length $5\ell$ and index $\ell$ over $\Fq$.

In particular, let $q=2^t$ and set $\mathbf{y}= \mathbf{c}+\alpha \mathbf{d}+\alpha^2 \mathbf{e}+\alpha^3 \mathbf{f}$, for some $\mathbf{c},\mathbf{d},\mathbf{e},\mathbf{f} \in \Fq^{\ell}$. We can rewrite the codewords in $C$ as: $$\left(\begin{array}{c} \mathbf{z}+\mathbf{d}+\mathbf{e}+\mathbf{f}\\
\mathbf{z}+\mathbf{c}+\mathbf{e}+\mathbf{f}\\
\mathbf{z}+\mathbf{c}+\mathbf{d}+\mathbf{f}\\
\mathbf{z}+\mathbf{c}+\mathbf{d}+\mathbf{e}\\
\mathbf{z}+\mathbf{c}+\mathbf{d}+\mathbf{e}+\mathbf{f}\end{array} \right),$$ where $\mathbf{z} \in C_1, \mathbf{c}+\alpha \mathbf{d}+\alpha^2 \mathbf{e}+\alpha^3 \mathbf{f} \in C_2$.
\end{ex}

\subsubsection{Self-Dual and Complementary-Dual QC Codes} \label{SD-LCD}
Observe that the monic polynomial $x^{m}-1$ is self-reciprocal. We rewrite the factorization into irreducible polynomials given in (\ref{irreducibles}) as follows, which is needed for the dual code analysis (see also \cite{LS}):
\begin{equation}\label{irreducibles-2}
x^{m}-1=g_{1}(x)\cdots g_{n}(x) h_{1}(x)h_{1}^*(x)\cdots
h_{p}(x)h_{p}^*(x),
\end{equation}
where $g_{i}(x)$ is self-reciprocal, for all $1\leq i\leq n$, and $h_{t}^*(x)=x^{\deg(h_t)}h_{t}(x^{-1})$ denotes the reciprocal of $h_{t}(x)$, for all $1\leq i\leq n$ and $1\leq t\leq p$. Note that $s=n+2p$, since (\ref{irreducibles}) and (\ref{irreducibles-2}) must be identical.

Let $\G_{i}:=\F_q[x]/\langle g_{i}(x) \rangle$, $\HH_{t}':=\F_q[x]/\langle h_{t}(x) \rangle$ and $\HH_{t}'':=\F_q[x]/\langle h_{t}^*(x) \rangle$, for each $i$ and $t$. By the CRT, the decomposition of $R$ in (\ref{CRT-1}) now becomes:
\begin{equation*} \label{CRT-4}
R \cong \left( \bigoplus_{i=1}^{n} \G_{i} \right) \oplus \left( \bigoplus_{t=1}^{p} \Bigl( \HH_{t}' \oplus \HH_{t}'' \Bigr) \right),
\end{equation*}
which implies
\begin{equation*}
R^{\ell}\cong \left(\bigoplus_{i=1}^{n} \G_i^{\ell}\right) \oplus \left(\bigoplus_{t=1}^{p} (\HH'_{t})^{\ell} \oplus (\HH''_{t})^{\ell}\right).
\end{equation*}

Hence, a QC code $C\subseteq R^{\ell}$ viewed as an $R$-submodule of $R^{\ell}$ now decomposes as (cf. (\ref{constituents}))
\begin{equation} \label{constituents-2}
C\cong\left( \bigoplus_{i=1}^{n} C_i \right) \oplus \left( \bigoplus_{t=1}^{p} \Bigl( C_{t}' \oplus C_{t}'' \Bigr) \right),
\end{equation}
where $C_i$'s are the $\G_i$-linear constituents of $C$ of length $\ell$, for all $i=1,\ldots,n$, $C'_{t}$'s and $C''_{t}$'s are the $\HH'_t$-linear and $\HH''_t$-linear constituents of $C$ of length $\ell$, respectively, for all $t=1,\ldots,p$. By fixing roots corresponding to irreducible factors $g_i(x),h_t(x),h_t^*(x)$ and via the CRT, one can write explicitly the constituents, as in (\ref{explicit constituents}), in this setting as well. Observe that $\HH'_t=\HH''_t$, since $\Fq(\xi^a)=\Fq(\xi^{-a})$, for any $a\in \{0,1,\ldots,m-1\}$.

Since $g_i(x)$ is self-reciprocal, the cardinality of $\G_i$, say $q_i$, is an even power of $q$ for all $1\leq i \leq n$ with two exceptions. One of these exceptions, for all $m$ and $q$, is the field coming from the irreducible factor $x-1$ of $x^m-1$. When $q$ is odd and $m$ is even, $x+1$ is another self-reciprocal irreducible factor of $x^m-1$. In these cases, $q_i=q$. Except for these two cases, we equip each $\G_i^{\ell}$ with the inner product
\begin{equation}
\label{hermprod}
\langle\, \mathbf{c},\mathbf{d}\,\rangle := \sum_{j=0}^{\ell-1} c_{j} d_{j}^{\sqrt{q_i}},
\end{equation}
where $\mathbf{c}=(c_{0},\ldots ,c_{\ell-1}), \mathbf{d}=(d_{0},\ldots ,d_{\ell-1})\in \G_i^{\ell}$. This is the Hermitian inner product. For the two exceptions, in which case the corresponding field $\G_i$ is $\mathbb{F}_q$, we equip $\G_i^{\ell}$ with the usual Euclidean inner product. With the appropriate inner product, Hermitian or Euclidean, on $\G_i$, $\bot_{\G}$ denotes the dual on $\G_i^{\ell}$. For each $1\leq t \leq p$, $\HH_t'^{\ell}=\mathbb{H}_t''^{\ell}$ is also equipped with the Euclidean inner product;  $\bot_e$ denotes the dual on $\HH_t'^{\ell}=\mathbb{H}_t''^{\ell}$. 

The dual of a QC code is also QC and the following result is immediate.
\begin{prop}[{\cite[Theorem 4.2]{LS}}]\label{duality}
Let $C$ be a QC code with CRT decomposition as in (\ref{constituents-2}). Then its dual code $C^{\bot}$ (under the Euclidean inner product) is of the form
\begin{equation}\label{dual}
C^\bot=\left( \bigoplus_{i=1}^n C_i^{\bot_{\G}} \right) \oplus \left( \bigoplus_{t=1}^p \Bigl( C_t''^{\bot_e} \oplus C_t'^{\bot_e} \Bigr) \right).
\end{equation}
\end{prop}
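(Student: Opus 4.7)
The plan is to reduce the Euclidean duality on $R^\ell$ to a single bilinear pairing on $R$ and then apply the CRT decomposition componentwise. Define a form $\beta : R^\ell \times R^\ell \To R$ by
\[
\beta\bigl(\mathbf{c}(x),\mathbf{d}(x)\bigr) := \sum_{j=0}^{\ell-1} c_j(x)\, d_j(x^{-1}),
\]
where $x^{-1}$ is interpreted modulo $x^m-1$ as $x^{m-1}$. A short calculation shows that the coefficient of $x^k$ in $\beta(\mathbf{c},\mathbf{d})$ is precisely the Euclidean inner product in $\Fq^{m\ell}$ of $\mathbf{c}$, viewed via the isomorphism (\ref{identification-1}), with the $T^{\ell k}$-shift of $\mathbf{d}$. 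Because $C$ is $R$-invariant, it follows that $\mathbf{d}\in C^{\bot}$ if and only if $\beta(\mathbf{c},\mathbf{d})=0$ in $R$ for every $\mathbf{c}\in C$.

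Next, I push $\beta$ through the CRT isomorphism of Section~\ref{Concatenation}. Evaluation at $\xi^{u}$ sends $x^{-1}$ to $\xi^{-u}$, so the behavior of $\beta$ on each CRT factor is dictated by whether the corresponding irreducible factor of $x^m-1$ is self-reciprocal. For a self-reciprocal $g_i$, the root $\xi^{-u_i}$ lies in the $\Fq$-Galois orbit of $\xi^{u_i}$, so evaluation at $\xi^{-u_i}$ differs from evaluation at $\xi^{u_i}$ by a power of Frobenius on $\G_i$; in the generic case (when $[\G_i:\Fq]$ is even) this power is exactly $\sqrt{q_i}$, producing the Hermitian pairing of (\ref{hermprod}), while in the two exceptional cases $g_i\in\{x-1,x+1\}$ one has $\G_i=\Fq$ and the pairing reduces to the Euclidean one. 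For a reciprocal pair $\{h_t,h_t^*\}$, by contrast, evaluations at roots of $h_t$ and at roots of $h_t^*$ are \emph{swapped} by $x\mapsto x^{-1}$, so on $(\HH_t')^\ell\oplus(\HH_t'')^\ell$ the form $\beta$ becomes a cross Euclidean pairing that matches the $\HH_t'$-component of $\mathbf{c}$ against the $\HH_t''$-component of $\mathbf{d}$, and vice versa.

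With these identifications in hand, the single condition $\beta(\mathbf{c},\mathbf{d})=0$ decouples into independent orthogonality conditions on the CRT constituents: on $\G_i^\ell$ it becomes $\bot_{\G}$-orthogonality, yielding the factor $C_i^{\bot_{\G}}$; on $(\HH_t')^\ell\oplus(\HH_t'')^\ell$ it forces the $\HH_t'$-component of $\mathbf{d}$ to be $\bot_e$-orthogonal to $C_t''$ and the $\HH_t''$-component to be $\bot_e$-orthogonal to $C_t'$, which explains the swap $C_t''^{\bot_e}\oplus C_t'^{\bot_e}$ appearing in (\ref{dual}). A final check against the identity $\dim_{\Fq}C+\dim_{\Fq}C^{\bot}=m\ell$, using the dimensions of the claimed constituents read off from (\ref{constituents-2}), upgrades the inclusion produced by $\beta$-orthogonality to equality. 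The main technical point is the verification that, for a self-reciprocal $g_i$, evaluation at $\xi^{-u_i}$ corresponds on $\G_i$ to raising to the $\sqrt{q_i}$-th power rather than to some other Galois automorphism; this is precisely what forces the exponent $\sqrt{q_i}$ in (\ref{hermprod}) and is the only place in the argument where the self-reciprocity of $g_i$ does real work.
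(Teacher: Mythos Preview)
Your argument is correct and is precisely the approach of the cited source \cite{LS}: the paper itself does not supply a proof of Proposition~\ref{duality} but merely records it as immediate from \cite[Theorem~4.2]{LS}, and what you have written is an accurate sketch of the Ling--Sol\'e proof, including the key bilinear form $\beta(\mathbf{c},\mathbf{d})=\sum_j c_j(x)d_j(x^{-1})$ on $R^\ell$ and its CRT analysis. One small remark: once you have the if-and-only-if characterization $\mathbf{d}\in C^\bot \Longleftrightarrow \beta(\mathbf{c},\mathbf{d})=0$ for all $\mathbf{c}\in C$, together with the fact that the constituents of $\mathbf{c}$ range \emph{independently} over the $C_i$, $C_t'$, $C_t''$ (this is exactly what the CRT decomposition (\ref{constituents-2}) says), you already obtain equality in (\ref{dual}) directly, so the dimension count at the end is a reassuring sanity check rather than a logically necessary step.
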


Recall that a linear code $C$ is said to be self-dual if $C=C^\bot$ and $C$ is called a linear code with complementary dual (LCD) if $C \cap C^\bot = \{\mathbf{0}\}$. We now characterize self-dual and LCD QC codes (QCCD) via their constituents (see \cite{GOS,LS2}).

\begin{thm} \label{SD-CDcriteria}
Let $C$ be a $q$-ary QC code of length $m\ell$ and index $\ell$, which has a CRT decomposition as in (\ref{constituents-2}).
\begin{enumerate}[leftmargin=*]
\item[i.] $C$ is (Euclidean) self-dual if and only if $C_i=C_i^{ \bot_{\G}}$, for all $1\leq i \leq n$, and $C_t'' = C_t'^{\bot_e}$ over $\HH_t'=\HH_t''$, for all $1\leq t \leq p$.
\item[ii.] $C$ is (Euclidean) LCD if and only if $C_i\cap C_i^{\bot_{\G}}=\{\mathbf{0}\}$, for all $1\leq i \leq n$, and $C_t' \cap C_t''^{\bot_e}=\{\mathbf{0}\}$, $C_t'' \cap C_t'^{\bot_e}=\{\mathbf{0}\}$ over $\HH_t'=\HH_t''$, for all $1\leq t \leq p$.
\end{enumerate}
\end{thm}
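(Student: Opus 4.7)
The plan is to deduce both parts directly from Proposition~\ref{duality}, which already tells us the shape of the CRT decomposition of $C^{\bot}$. The only extra input needed is a uniqueness/compatibility property of the CRT decomposition: since $R^{\ell} \cong \bigoplus_{i=1}^n \G_i^{\ell} \oplus \bigoplus_{t=1}^p\bigl((\HH_t')^{\ell} \oplus (\HH_t'')^{\ell}\bigr)$ is an isomorphism of $R$-modules (and $R$ is semisimple under the assumption $\gcd(m,q)=1$), every $R$-submodule $M \subseteq R^{\ell}$ splits uniquely as $M = \bigoplus M_i \oplus \bigoplus (M_t' \oplus M_t'')$ with the summands landing in the respective components, and this decomposition commutes with sums and intersections of $R$-submodules.

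For part (i), I would write $C$ as in (\ref{constituents-2}) and $C^{\bot}$ as in (\ref{dual}), then use the uniqueness above to equate the decompositions component by component. This gives $C = C^{\bot}$ if and only if $C_i = C_i^{\bot_{\G}}$ for every $i$ and, simultaneously, $C_t' = C_t''^{\bot_e}$ and $C_t'' = C_t'^{\bot_e}$ for every $t$. Since $(\cdot)^{\bot_e}$ is an involution on subspaces of $(\HH_t')^{\ell} = (\HH_t'')^{\ell}$, these two equalities are equivalent, so only one needs to be recorded; this is exactly the criterion stated in (i).

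For part (ii), I would apply intersection-compatibility to $C$ and $C^{\bot}$. Using (\ref{constituents-2}) and (\ref{dual}), the intersection breaks up as
$$C \cap C^{\bot} \cong \bigoplus_{i=1}^n \bigl(C_i \cap C_i^{\bot_{\G}}\bigr) \oplus \bigoplus_{t=1}^p \Bigl(\bigl(C_t' \cap C_t''^{\bot_e}\bigr) \oplus \bigl(C_t'' \cap C_t'^{\bot_e}\bigr)\Bigr).$$
The right-hand side vanishes if and only if each summand vanishes, which is precisely the LCD criterion of (ii).

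The substantive work has really all happened in Proposition~\ref{duality}; conditional on it, the theorem is algebraic book-keeping. The one point worth keeping in mind, and the closest thing to an obstacle, is that the Euclidean dual on $\F_q^{m\ell}$ translates into \emph{different} pairings on different constituents: a Hermitian dual $\bot_{\G}$ on the self-reciprocal components $\G_i^{\ell}$ (when $q_i \neq q$), and a Euclidean dual on $(\HH_t')^{\ell} = (\HH_t'')^{\ell}$ that swaps the primed and double-primed constituents. Once this swapping is accounted for, matching components on both sides gives (i) and (ii) without further calculation.
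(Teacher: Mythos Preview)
Your proposal is correct and follows essentially the same approach as the paper: the paper's proof consists of the single sentence ``The proof is immediate from the CRT decompositions of $C$ in (\ref{constituents-2}) and of its dual $C^{\bot}$ in (\ref{dual}),'' and you have simply spelled out the component-matching and the intersection-compatibility that make this immediate. Your additional remarks about semisimplicity, uniqueness of the decomposition, and the equivalence of $C_t' = C_t''^{\bot_e}$ with $C_t'' = C_t'^{\bot_e}$ are all accurate elaborations of what the paper leaves implicit.
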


\begin{proof}
The proof is immediate from the CRT decompositions of $C$ in (\ref{constituents-2}) and of its dual $C^{\bot}$ in (\ref{dual}).
\end{proof}
The following special cases are easy to derive from Theorem \ref{SD-CDcriteria} above.

\begin{cor} \label{CDinstance} \hfill
\begin{enumerate}[leftmargin=*]
\item[i.] If the CRT decomposition of $C$ is as in (\ref{constituents-2}) with $C_i=C_i^{ \bot_{\G}}$, for all $1\leq i \leq n$, and $C_t'=\{0\}, C_t''=(\HH_t'')^{\ell}$ (or $C_t''=\{0\}, C_t'=(\HH_t')^{\ell}$), for all $1\leq t \leq p$, then $C$ is self-dual.
\item[ii.] If the CRT decomposition of $C$ is as in (\ref{constituents-2}) with $C_i\cap C_i^{\bot_{\G}}=\{\mathbf{0}\}$, for all $1\leq i \leq n$, and Euclidean LCD codes $C_t'=C_t''$ over $\HH_t'=\HH_t''$, for all $1\leq t \leq p$, then $C$ is LCD.
\end{enumerate}
\end{cor}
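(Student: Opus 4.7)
The plan is to deduce both parts of Corollary \ref{CDinstance} as direct specializations of Theorem \ref{SD-CDcriteria} by checking that, under the hypotheses of each part, the criteria listed in the theorem are automatically fulfilled. No new machinery is needed; the only care required is to track the identification $\HH_t'=\HH_t''$ and to recall how the Euclidean dual interacts with the two extreme subcodes $\{\mathbf{0}\}$ and the full ambient module $(\HH_t')^\ell$.

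For part (i), I would first observe that the condition $C_i=C_i^{\bot_\G}$ for every $1\le i\le n$ is exactly the first half of the self-duality criterion in Theorem \ref{SD-CDcriteria}(i). For the paired constituents I would use that, under the Euclidean inner product on $(\HH_t')^\ell$, the dual of the zero code is the full ambient module and vice versa. Thus if $C_t'=\{\mathbf{0}\}$ then $C_t'^{\bot_e}=(\HH_t')^\ell=(\HH_t'')^\ell=C_t''$, yielding the required equality $C_t''=C_t'^{\bot_e}$; the symmetric prescription $C_t''=\{\mathbf{0}\},\ C_t'=(\HH_t')^\ell$ is handled identically. The remaining hypothesis of Theorem \ref{SD-CDcriteria}(i) is therefore in force, and self-duality of $C$ follows.

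For part (ii), the condition $C_i\cap C_i^{\bot_\G}=\{\mathbf{0}\}$ for every $i$ is again the first half of the LCD criterion in Theorem \ref{SD-CDcriteria}(ii). For the paired constituents, substituting the assumption $C_t'=C_t''$ collapses the two required conditions $C_t'\cap C_t''^{\bot_e}=\{\mathbf{0}\}$ and $C_t''\cap C_t'^{\bot_e}=\{\mathbf{0}\}$ to the single condition $C_t'\cap C_t'^{\bot_e}=\{\mathbf{0}\}$, which is precisely the assumption that $C_t'$ is Euclidean LCD over $\HH_t'=\HH_t''$. Invoking Theorem \ref{SD-CDcriteria}(ii) then gives that $C$ is LCD.

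There is no genuine obstacle in the argument, as the corollary is essentially a book-keeping exercise. The only subtle point is the use of the identification $\HH_t'=\HH_t''$ (justified in the text by $\Fq(\xi^a)=\Fq(\xi^{-a})$), which is what allows us to compare $C_t'$ and $C_t''$ inside a common ambient module and to interpret $C_t''$ as a subset of the space where $C_t'^{\bot_e}$ naturally lives.
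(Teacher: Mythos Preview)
Your proposal is correct and matches the paper's approach exactly: the paper does not spell out a proof but simply remarks that the corollary is easy to derive from Theorem~\ref{SD-CDcriteria}, and your argument is precisely the verification that the hypotheses of each part specialize to the conditions of that theorem. There is nothing to add.
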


\section{Minimum Distance Bounds}\label{distance section}

Given a $q$-ary code $C$ that is not the trivial zero code, the minimum (Hamming) distance of $C$ is defined as $d(C):=\min\{\mbox{wt}(\mathbf{c}) : 0\neq \mathbf{c}\in C\}$, where $\mbox{wt}(\mathbf{c})$ denotes the number of nonzero entries in $\mathbf{c}\in C$. Using the tools described in Sections \ref{background} and \ref{DFT}, we address some lower bounds on the minimum distance of a given QC code in this section.

\subsection{The Jensen Bound}
The concatenated structure in Theorem \ref{Jensen's thm} yields a minimum distance bound for QC codes (\cite[Theorem 4]{J}). This bound is a consequence of the work of Blokh and Zyablov in \cite{BZ}, which holds for general concatenated codes.

\begin{thm}\label{jensen}
Let $C$ be a $q$-ary QC code of length $m\ell$, index $\ell$ with the concatenated structure
$$C=\bigoplus\limits_{t=1}^{g} \langle \theta_{t} \rangle \Box C_{t},$$
where, without loss of generality, we have chosen the indices $t$ such that $0 < d(C_{1}) \leq d(C_{2}) \leq \cdots \leq d(C_{g})$. Then we have
\begin{equation*} \label{jensens concat bound}
d(C)\geq \min\limits_{1\leq e \leq g} \left\{d(C_{e}) d\Bigl(\langle \theta_{1} \rangle  \oplus \cdots \oplus \langle \theta_{e} \rangle \Bigr) \right\}.
\end{equation*}
\end{thm}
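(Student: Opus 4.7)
The plan is to exhibit any nonzero codeword of $C$ as a sum of contributions from the outer codes, isolate the largest ``active'' outer code, and then bound both the number of nonzero columns and the Hamming weight of each such column. Fix a nonzero codeword $\mathbf{c} \in C$, viewed as an $m\times \ell$ array. By Theorem~\ref{Jensen's thm} and the concatenation maps in (\ref{isoms}), there exist unique $\mathbf{c}_t = (c_{t,0},\ldots,c_{t,\ell-1}) \in C_t$ for $1 \leq t \leq g$ with
$$\mathbf{c} = \sum_{t=1}^{g} \psi_t(\mathbf{c}_t),$$
where $\psi_t$ is applied coordinatewise; explicitly, the $j$-th column of $\mathbf{c}$ equals $\sum_{t=1}^{g} \psi_t(c_{t,j}) \in \bigoplus_{t=1}^{g} \langle \theta_t \rangle$. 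Since $\mathbf{c} \neq \mathbf{0}$, let $e$ be the \emph{largest} index in $\{1,\ldots,g\}$ with $\mathbf{c}_e \neq \mathbf{0}$.

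Next, I count nonzero columns. Because $\mathbf{c}_t = \mathbf{0}$ for every $t > e$, each column of $\mathbf{c}$ actually lies in the cyclic code $\langle \theta_1 \rangle \oplus \cdots \oplus \langle \theta_e \rangle$ of length $m$ over $\F_q$. Now $\mathbf{c}_e \in C_e$ is nonzero, so $\mathbf{c}_e$ has at least $d(C_e)$ nonzero coordinates; for each such coordinate $j$, the element $\psi_e(c_{e,j}) \in \langle \theta_e \rangle$ is nonzero because $\psi_e$ is an $\F_q$-linear isomorphism. Since the sum $\bigoplus_{t=1}^{e} \langle \theta_t \rangle$ is direct, the column $\sum_{t=1}^{e} \psi_t(c_{t,j})$ is a nonzero element of this direct sum. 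Therefore there are at least $d(C_e)$ nonzero columns of $\mathbf{c}$, and each one, being a nonzero codeword of $\langle \theta_1 \rangle \oplus \cdots \oplus \langle \theta_e \rangle$, has Hamming weight at least $d\bigl(\langle \theta_1 \rangle \oplus \cdots \oplus \langle \theta_e \rangle\bigr)$.

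Multiplying these two lower bounds,
$$\mathrm{wt}(\mathbf{c}) \;\geq\; d(C_e)\, d\bigl(\langle \theta_1 \rangle \oplus \cdots \oplus \langle \theta_e \rangle\bigr) \;\geq\; \min_{1 \leq e \leq g}\bigl\{d(C_e)\, d\bigl(\langle \theta_1 \rangle \oplus \cdots \oplus \langle \theta_e \rangle\bigr)\bigr\},$$
since the value of $e$ depends on the chosen codeword. Taking the minimum over all nonzero $\mathbf{c} \in C$ yields the claimed bound. The ordering $d(C_1) \leq \cdots \leq d(C_g)$ is merely a convenient convention and plays no logical role in the argument, beyond making it transparent that the same bound can be written as a single minimum indexed by $e$.

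The only genuine subtlety, which I view as the main step to get right, is the observation that each nonzero column lies in the \emph{truncated} direct sum $\bigoplus_{t=1}^{e} \langle \theta_t \rangle$ and not in the full sum $\bigoplus_{t=1}^{g} \langle \theta_t \rangle$; this is what allows the column weight to be bounded by the minimum distance of the smaller cyclic code, and it is precisely the point at which choosing $e$ to be the \emph{largest} active index (rather than, say, the smallest) becomes essential.
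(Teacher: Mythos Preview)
Your proof is correct and follows essentially the same approach as the paper: pick the largest index $e$ with $\mathbf{c}_e\neq \mathbf{0}$, use directness of $\bigoplus_{t\le e}\langle\theta_t\rangle$ to conclude that $\mathbf{c}$ has at least $d(C_e)$ nonzero columns, each lying in the truncated cyclic code $\langle\theta_1\rangle\oplus\cdots\oplus\langle\theta_e\rangle$. If anything, your write-up is more explicit than the paper's, which compresses the final column-weight bound and the multiplication into ``Hence the result follows.''
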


\begin{pf}
Let $\mathbf{c}$ be a nonzero codeword in $C$. By the concatenated structure of Section \ref{Concatenation} and (\ref{isoms}), 
$$\mathbf{c}=\sum_{i=1}^g \psi_i(\mathbf{c}_i),$$
with $\mathbf{c}_i\in C_i$, for all $1\leq i \leq g$. Assume that $e$ is the maximal index such that $\mathbf{c}_e$ is a nonzero vector and $\mathbf{c}_{e+1}, \ldots ,\mathbf{c}_g$ are all zero vectors. Hence,
$$\mathbf{c}=\psi_1(\mathbf{c}_1)+\cdots + \psi_e(\mathbf{c}_e),$$
and each $\psi_i(\mathbf{c}_i) \in \langle \theta_i\rangle ^\ell$. Since the sum of $\langle \theta_i \rangle$'s is direct, the codeword $\mathbf{c}$ belongs to $\left(\langle \theta_1 \rangle \oplus \cdots \oplus \langle \theta_e \rangle \right)^\ell$. Note that $\mathbf{c}_e$ has at least $d(C_e)$ nonzero coordinates, all of which are mapped to a nonzero vector in $\langle \theta_e \rangle$ by $\psi_e$. This implies that $\mathbf{c}$ has at least $d(C_e)$ nonzero coordinates, since each coordinate of $\mathbf{c}$ belongs to the direct sum $\langle \theta_1 \rangle \oplus \cdots \oplus \langle \theta_e \rangle$. Hence the result follows. 
\end{pf}

\subsection{The Lally Bound}

Let $C$ be a QC code of length $m\ell$ and index $\ell$ over $\Fq$. Let $\{1, \alpha, \ldots, \alpha^{\ell-1}\}$ be some fixed choice of basis of $\F_{q^\ell}$ as a vector space over $\Fq$. We view the codewords of $C$ as $m\times \ell$ arrays as in (\ref{array}) and consider the following map:
\begin{equation*}\begin{array}{lll} 
\Phi : \hspace{2cm} \F_q^{m\ell} & \longrightarrow & \hspace{.6cm}\F_{q^\ell}^m  \\[8pt]
\mathbf{c}=\left(
  \begin{array}{ccc}
    c_{0,0} & \ldots & c_{0,\ell-1} \\
    \vdots &  & \vdots \\
    c_{m-1,0} & \ldots & c_{m-1,\ell-1} \\
  \end{array}
\right) & \longmapsto &   \left(\begin{array}{c}
    c_{0} \\
    \vdots \\
    c_{m-1} \\
  \end{array}\right) ,
\end{array}
\end{equation*}
where $c_i=c_{i,0} + c_{i,1}\alpha + \cdots + c_{i,\ell-1}\alpha^{\ell-1}\in \F_{q^\ell}$, for all $0\leq i \leq m-1$.

Clearly, $\Phi(\mathbf{c})$ lies in some cyclic code, for any $\mathbf{c}\in C$. We now define the smallest such cyclic code as $\widehat{C}$. First, we equivalently extend the map $\Phi$ above to the polynomial description of codewords as in (\ref{identification-1}):
\begin{eqnarray}\label{Lal}
\Phi : \Fq[x]^{\ell} &\To& \F_{q^{\ell}}[x] \\
\mathbf{c}(x)=(c_0(x),\ldots,c_{\ell-1}(x)) &\longmapsto& c(x)=\displaystyle\sum_{j=0}^{\ell-1}c_j(x)\alpha^j .\notag
\end{eqnarray}
If $C$ has generating set $\{\mathbf{f}_1,\ldots,\mathbf{f}_r\}$, where $\mathbf{f}_k = (f^{(k)}_{0}(x),\ldots,f^{(k)}_{\ell-1}(x))\in \Fq[x]^{\ell}$, for each $k \in \{1,\ldots,r\}$, then  $\widehat{C}=\langle \gcd(f_1(x),\ldots, f_r(x), x^m-1)\rangle$ such that $f_k=\Phi(\mathbf{f}_k)\in \F_{q^{\ell}}[x]$, for all $k\in \{1,\ldots,r\}$ \cite{L2}.

Next, we consider the $q$-ary linear code of length $\ell$ that is generated by the rows of the codewords in $C$, which are represented as $m\times \ell$ arrays as in (\ref{array}). Namely, let $B$ be the linear block code of length $\ell$ over $\Fq$, generated by $\{\mathbf{f}_{k,i} : k\in\{1,\ldots,r\},\ i\in\{0,\ldots,m-1\}\}\subseteq \F_{q}^{\ell}$, where each $\mathbf{f}_{k,i}:=(f^{(k)}_{i,0},\ldots,f^{(k)}_{i,\ell-1}) \in\Fq^\ell$ is the vector of the $i^{th}$ coefficients of the polynomials $f^{(k)}_{j} (x)=f^{(k)}_{0,j}+f^{(k)}_{1,j}x+\cdots+f^{(k)}_{m-1,j}x^{m-1}$, for all $k\in\{1,\ldots,r\}$ and $j\in\{0,\ldots,\ell-1\}$.

Since the image of any codeword $\mathbf{c}(x) \in C$ under the map $\Phi$ is an element of the cyclic code $\widehat{C}$ over $\F_{q^\ell}$, there are at least $d(\widehat{C})$ nonzero rows in each nonzero codeword of $C$. For any $i\in\{0,\ldots,m-1\}$, the $i^{th}$ row $\mathbf{c}_i=(c_{i,0},\ldots,c_{i,\ell-1})$ of a codeword $\mathbf{c}\in C$ can be viewed as a codeword in $B$, therefore, a nonzero $\mathbf{c}_i$ has weight at least $d(B)$. Hence, we have shown the following.

\begin{thm}[Lally bound {\cite[Theorem 5]{L2}}]
Let $C$ be an $r$-generator QC code of length $m\ell$ and index $\ell$ over $\F_{q}$ with generating set $\{\mathbf{f}_1,\ldots,\mathbf{f}_r\}\subseteq \Fq[x]^{\ell}$. Let the cyclic code $\widehat{C}\subseteq \F_{q^\ell}^m$ and the linear code $B\subseteq\Fq^{\ell}$ be defined as above. Then, \[d(C)\geq d(\widehat{C})d(B).\] 
\end{thm}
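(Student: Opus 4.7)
The overall strategy is a two-level weight estimate: first bound from below the number of nonzero rows of a nonzero codeword (viewed as an $m\times\ell$ array), then bound from below the Hamming weight of each such nonzero row. The first bound will come from the ambient cyclic code $\widehat{C}$ over $\F_{q^\ell}$, and the second from the ambient row code $B$ over $\F_q$. Multiplying the two gives the desired inequality.

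Concretely, I would fix a nonzero $\mathbf{c}\in C$ and consider its image $c(x)=\Phi(\mathbf{c}(x))\in\F_{q^\ell}[x]$. Since $\{1,\alpha,\ldots,\alpha^{\ell-1}\}$ is an $\F_q$-basis of $\F_{q^\ell}$, the map $\Phi$ of (\ref{Lal}) is $\F_q$-linear and injective, so $c(x)\neq 0$. The key structural point is that $\Phi$ intertwines the $R$-action: multiplication of $\mathbf{c}(x)$ by $x$ in $R^\ell$ corresponds to multiplication by $x$ modulo $x^m-1$ in $\F_{q^\ell}[x]$. Hence the image of $C$ is an $\F_{q^\ell}[x]/\langle x^m-1\rangle$-submodule, i.e., a cyclic code, generated by $f_1(x),\ldots,f_r(x)$. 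The smallest ideal of $\F_{q^\ell}[x]/\langle x^m-1\rangle$ containing these generators is $\langle\gcd(f_1(x),\ldots,f_r(x),x^m-1)\rangle=\widehat{C}$, so $c(x)\in\widehat{C}$. The coefficient of $x^i$ in $c(x)$ is nonzero precisely when the $i^{th}$ row $\mathbf{c}_i$ of the array of $\mathbf{c}$ is nonzero (again by the linear independence of the $\alpha^j$), so $\mathbf{c}$ has at least $d(\widehat{C})$ nonzero rows.

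Next, I would argue that every row $\mathbf{c}_i$ of every codeword of $C$ lies in $B$. Since $C$ is $R$-generated by $\mathbf{f}_1,\ldots,\mathbf{f}_r$, an arbitrary codeword is an $\F_q$-linear combination of cyclic shifts (by powers of $T^\ell$) of these generators. A shift of a generator merely permutes its rows, and taking $\F_q$-linear combinations of rows stays in $\F_q$-span of $\{\mathbf{f}_{k,i}\}$, which is exactly $B$. Therefore each nonzero row of $\mathbf{c}$ has weight at least $d(B)$. Combining, $\mathrm{wt}(\mathbf{c})\geq d(\widehat{C})\,d(B)$, and minimizing over nonzero $\mathbf{c}$ yields the bound.

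The main conceptual obstacle (and the one worth spelling out carefully) is the identification $\widehat{C}=\langle\gcd(f_1,\ldots,f_r,x^m-1)\rangle$: one has to check that $\Phi$ is genuinely an $R$-module homomorphism onto an ideal of $\F_{q^\ell}[x]/\langle x^m-1\rangle$, and that the gcd description coincides with the smallest cyclic code containing $\Phi(C)$. Everything else — the injectivity of $\Phi$, the row-count interpretation of $\mathrm{wt}(c(x))$, and the membership of rows in $B$ — is routine once the $R$-linear structure of $\Phi$ is in place.
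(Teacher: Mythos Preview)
Your argument is exactly the paper's: bound the number of nonzero rows of a nonzero codeword via membership of $\Phi(\mathbf{c})$ in $\widehat{C}$, bound each nonzero row's weight via membership in $B$, and multiply. One minor slip to clean up: $\Phi$ is only $\F_q[x]$-linear, not $\F_{q^\ell}$-linear, so $\Phi(C)$ is in general \emph{not} an $\F_{q^\ell}[x]/\langle x^m-1\rangle$-submodule (i.e., not itself a cyclic code over $\F_{q^\ell}$); however this does not affect the only fact you need, namely the containment $\Phi(C)\subseteq\widehat{C}$, which follows since each $f_k(x)=\Phi(\mathbf{f}_k)$ lies in $\widehat{C}$ and $\Phi$ respects multiplication by $\F_q[x]$.
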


\subsection{Spectral Bounds}

In \cite{ST}, Semenov and Trifonov developed a spectral theory for QC codes by using the upper triangular polynomial matrices (\ref{gen matrix}) given by Lally and Fitzpatrick \cite{LF}; this gives rise to a BCH-like minimum distance bound. Their bound was improved by Zeh and Ling in \cite{LZ} by using the Hartmann--Tzeng (HT) Bound for cyclic codes.\footnote{The BCH Bound is found in \cite{BC,H}; the HT Bound, which generalizes the BCH Bound, is found in \cite[Theorem 2]{HT}.} Before proceeding to the spectral theory of QC codes and their minimum distance bounds, we explore relevant minimum distance bounds for cyclic codes that generalize the BCH and HT Bounds, namely the Roos Bound and the shift bound.

\subsubsection{Cyclic Codes and Distance Bounds From Their Zeros}\label{prepa}
Recall that a cyclic code $C\subseteq \Fq^m$ can be viewed as an ideal of $R$. Since $R$ is a principal ideal ring, there exists a unique monic polynomial $g(x)\in R$ such that $C=\langle g(x)\rangle$, \ie, each codeword $c(x)\in C$ is of the form $c(x)=a(x)g(x)$, for some $a(x)\in R$. The polynomial $g(x)$, which is a divisor of $x^m-1$, is called the {\it generator polynomial} of $C$. For any positive integer $p$, let $\mathbf{0}_p$ denote throughout the all-zero vector of length $p$. A cyclic code $C=\langle g(x) \rangle$ is $\{\mathbf{0}_m\}$ if and only if $g(x)=x^m-1$. 

The roots of $x^m-1$ are of the form $1, \xi, \ldots, \xi^{m-1}$, where $\xi$ is a fixed primitive $m^{th}$ root of unity. Henceforth, let $\Omega :=\{\xi^k\ :\ 0\leq k \leq m-1\}$ be the set of all $m^{th}$ roots of unity. Recall that the splitting field of $x^m-1$ is denoted by $\F$, hence, $\F$ is the smallest extension of $\Fq$ that contains $\Omega$. Given the cyclic code $C=\langle g(x)\rangle$, the set $L:=\{\xi^k\ :\ g(\xi^k)=0\}\subseteq \Omega$ of roots of its generator polynomial is called the {\it zeros} of $C$. Note that $\xi^k\in L$ implies $\xi^{qk}\in L$, for each $k$. A nonempty subset $E\subseteq\Omega$ is said to be {\it consecutive} if there exist integers $e,n$ and $\delta$ with $e\geq 0,\delta \geq 2, n> 0$ and $\gcd(m,n)=1$ such that
\begin{equation} \label{cons zero set}
E:=\{\xi^{e+zn}\ :\ 0\leq z\leq \delta-2\}\subseteq\Omega.
\end{equation}

We now describe the Roos bound for cyclic codes. For $\emptyset\neq P\subseteq\Omega$, let $C_P$ denote any cyclic code of length $m$ over $\F_q$, whose set of zeros contains $P$. Let $d_P$ denote the least possible amount that the minimum distance of any $C_P$ can have. In particular, $d_P=d(C_P)$ when $P$ is the set of zeros of $C_P$.

\begin{thm}[Roos bound {\cite[Theorem 2]{R2}}] \label{Roos} Let $N$ and $M$ be two nonempty subsets of $\Omega$. If there exists a consecutive set $M'$ containing $M$ such that $|M'| \leq |M| + d_N -2$, then we have $d_{MN}\geq |M| + d_N -1$, where $MN:=\{\varepsilon\vartheta\mid \varepsilon\in M,\ \vartheta\in N\}$.
\end{thm}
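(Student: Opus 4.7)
The plan is to argue by contradiction using a rank decomposition of an evaluation matrix. Suppose there exists a nonzero codeword $c \in C_{MN}$ of weight $w \le |M|+d_N-2$, and let $S = \mathrm{supp}(c) \subseteq \{0,1,\ldots,m-1\}$. Writing $M' = \{\xi^{e+zn} : 0 \le z \le |M'|-1\}$ and setting $\eta := \xi^n$ (still a primitive $m$-th root of unity, since $\gcd(m,n)=1$), I form the $|M'|\times|N|$ evaluation matrix $B$ defined by $B_{z,\vartheta} := c(\xi^{e+zn}\vartheta)$. Because $c$ vanishes on $MN$ and $M \subseteq M'$, every row of $B$ indexed by a $z$ with $\xi^{e+zn}\in M$ vanishes identically, and consequently $\mathrm{rank}(B) \le |M'|-|M|$.

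The key step is the factorisation $B = QDP$ obtained by expanding
\[
c(\xi^{e+zn}\vartheta) = \sum_{s\in S} c_s\,\xi^{se}\,\eta^{sz}\,\vartheta^s .
\]
Here $Q$ is the $|M'|\times w$ matrix with $Q_{z,s} = (\eta^s)^z$, the $w\times w$ diagonal matrix $D = \mathrm{diag}(c_s\xi^{se})$ is invertible (as each $c_s\ne 0$), and $P$ is the $w\times|N|$ matrix with $P_{s,\vartheta}=\vartheta^s$. The submatrix of $Q$ built from rows $z=0,1,\ldots,r-1$ and any $r$ columns is a genuine Vandermonde matrix in the pairwise distinct values $\eta^{s_{i_1}},\ldots,\eta^{s_{i_r}}$, which forces $\mathrm{rank}(Q) = \min(|M'|,w)$. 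For $P$, the kernel of its transpose (as a subspace of $\F^{w}$ indexed by $S$) consists of vectors $\beta$ for which the polynomial $\sum_{s\in S}\beta_s x^s$ vanishes on $N$. Any such nonzero $\beta$ extends by zeros to a nonzero codeword of a cyclic code whose zero set contains $N$, and hence has Hamming weight at least $d_N$; the Singleton bound then gives $\dim\ker P^{T}\le w - d_N + 1$, i.e.\ $\mathrm{rank}(P)\ge d_N-1$.

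Finally, Sylvester's rank inequality applied to $B = Q\cdot(DP)$ with inner dimension $w$ yields
\[
\mathrm{rank}(B) \;\ge\; \mathrm{rank}(Q)+\mathrm{rank}(P)-w \;\ge\; \min(|M'|,w)+d_N-1-w .
\]
Combining this with the combinatorial upper bound $\mathrm{rank}(B)\le|M'|-|M|$ and splitting into the two cases $|M'|\le w$ and $|M'|>w$ gives either $w\ge |M|+d_N-1$ or $|M'|\ge|M|+d_N-1$; each possibility contradicts the standing hypotheses, completing the argument. The main difficulty is arranging the factorisation so that the purely combinatorial upper bound on $\mathrm{rank}(B)$ can be played against a lower bound that simultaneously exploits Vandermonde non-degeneracy on the $M'$-side and the minimum distance $d_N$ on the $N$-side; once this is in place, Sylvester's inequality closes the loop.
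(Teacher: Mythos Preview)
The paper does not supply its own proof of this theorem; it is quoted from Roos's article and used only as a black box for the later spectral bounds on QC codes. So there is no in-paper argument to compare against.

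Your plan is essentially Roos's original rank argument, and it is correct. One small point deserves an explicit sentence: the step ``the Singleton bound then gives $\dim\ker P^{T}\le w-d_N+1$, i.e.\ $\mathrm{rank}(P)\ge d_N-1$'' tacitly assumes $w\ge d_N-1$. If instead $w\le d_N-2$, Singleton only yields $\ker P^{T}=\{0\}$ and hence $\mathrm{rank}(P)=w$, which is not enough to close the case $|M'|>w$ (you would only get $|M'|-|M|\ge w$, consistent with the hypotheses). In fact $w\ge d_N$ is automatic here: since $M\ne\emptyset$, fix any $\varepsilon\in M$; then $\varepsilon N\subseteq MN$, so $c$ vanishes on $\varepsilon N$, and the polynomial $c(\varepsilon x)$ (of the same weight $w$) lies in a cyclic code whose zero set contains $N$, forcing $w\ge d_N$. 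With that one-line observation added, your rank bound on $P$ is fully justified and the argument is complete.
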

If $N$ is consecutive like in (\ref{cons zero set}), then we get the following.
\begin{cor} [{\cite[Corollary 1]{R2}}] \label{Roos2}
Let $N, M$ and $M'$ be as in Theorem \ref{Roos}, with $N$ consecutive. Then $|M'| < |M| + |N|$ implies $d_{MN}\geq |M| + |N|$.
\end{cor}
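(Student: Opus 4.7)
The plan is to derive the corollary as an immediate consequence of Theorem~\ref{Roos}, using only one extra input: the BCH bound applied to a code whose set of zeros contains the consecutive set $N$. Since $N$ has the shape $\{\xi^{e+zn} : 0\le z\le \delta-2\}$ with $\gcd(m,n)=1$, the BCH bound (mentioned in the footnote of the excerpt) gives $d_N \geq |N|+1$; this is the one fact I need beyond the statement of Theorem~\ref{Roos}.

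With that in hand, I would just verify that the hypothesis of Theorem~\ref{Roos} is met. The assumption $|M'|<|M|+|N|$ is equivalent to $|M'|\leq |M|+|N|-1$. Combining with $d_N\geq |N|+1$ yields
\[
|M'|\ \leq\ |M|+|N|-1\ \leq\ |M|+(d_N-1)-1\ =\ |M|+d_N-2,
\]
so $M'$ is a consecutive set containing $M$ of cardinality at most $|M|+d_N-2$, exactly what Theorem~\ref{Roos} requires.

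Applying Theorem~\ref{Roos} then gives $d_{MN}\geq |M|+d_N-1$, and using $d_N\geq |N|+1$ once more produces $d_{MN}\geq |M|+|N|$, as desired.

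There is no real obstacle here; the only subtlety worth flagging is that one must recognize $N$ consecutive as forcing $d_N\geq |N|+1$ via BCH, rather than trying to argue directly from the structure of $N$ inside the proof of Roos. Once that observation is made, the corollary is a one-line specialization of Theorem~\ref{Roos}.
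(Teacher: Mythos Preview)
Your proof is correct and is exactly the intended derivation: the paper does not spell out a proof for this corollary, but the lead-in sentence ``If $N$ is consecutive like in (\ref{cons zero set}), then we get the following'' signals precisely the step you supply, namely that consecutiveness of $N$ gives $d_N\geq |N|+1$ via the BCH bound, after which Theorem~\ref{Roos} applies directly.
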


\begin{rem}\label{Roos remark}
In particular, the case $M=\{1\}$ in Corollary~\ref{Roos2} yields the BCH bound for the associated cyclic code. Taking $M'=M$ in the corollary yields the HT bound. 
\end{rem}

Another improvement to the HT bound for cyclic codes, known as the shift bound, was given by van Lint and Wilson in \cite{LW}. To describe the shift bound, we need the notion of an {\it independent set}, which can be constructed over any field in a recursive way.

Let $S$ be a subset of some field $\K$ of any characteristic. One inductively defines a family of finite subsets of $\K$, called independent with respect to $S$, as follows:
\begin{enumerate}[leftmargin=*]
\item $\emptyset$ is independent with respect to $S$.
\item If $A \subseteq S$ is independent with respect to $S$, then $A\cup\{b\}$ is independent with respect to $S$ for all $b\in \K \setminus S$. \label{cond:2}
\item If $A$ is independent with respect to $S$ and $c\in\K^{*}$, then $cA$ is independent with respect to $S$. \label{cond:3}
\end{enumerate}

We define the {\it weight} of a polynomial $f(x) \in \K[x]$, denoted by $\mbox{wt}(f)$, as the number of nonzero coefficients in $f(x)$.

\begin{thm} [Shift bound {\cite[Theorem 11]{LW}}]\label{shift bound}
Let $0\neq f(x)\in \K[x]$ and let $S:=\{\theta\in \K\ : \ f(\theta)=0\}$. Then $\emph{wt}(f)\geq |A|,$ for every subset $A$ of $\K$ that is independent with respect to $S$.
\end{thm}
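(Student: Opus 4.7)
The plan is to strengthen the conclusion to a linear-algebraic rank identity and then induct on $|A|$. Writing $f(x)=\sum_{j=1}^{w}a_{j}x^{e_{j}}$ with $a_{j}\neq 0$, exponents $e_{1}<\cdots <e_{w}$, and $w=\mbox{wt}(f)$, I would associate to each finite $X\subseteq \K$ the generalised Vandermonde matrix $M_{X}=\bigl(\alpha^{e_{j}}\bigr)_{\alpha\in X,\,1\leq j\leq w}\in \K^{|X|\times w}$. Since $\mbox{rank}(M_{X})\leq w$ trivially, the theorem will follow from the stronger claim that $\mbox{rank}(M_{A})=|A|$ whenever $A$ is independent with respect to $S$, which I intend to establish by strong induction on $|A|$.

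The base case $|A|=0$ is vacuous. For the inductive step, I first dispose of rule~3: if $A=cA''$ with $c\neq 0$ and $A''$ independent with respect to $S$, then $M_{A}=M_{A''}\cdot \mbox{diag}(c^{e_{1}},\ldots,c^{e_{w}})$, a column scaling by nonzero factors, so $\mbox{rank}(M_{A})=\mbox{rank}(M_{A''})$. Peeling off any trailing applications of rule~3 from a derivation of $A$, one may assume the derivation ends with a rule~2 step, whence $A=A_{0}\cup\{b\}$ with $A_{0}\subseteq S$ independent with respect to $S$, $|A_{0}|=|A|-1$, and $b\notin S$. By the inductive hypothesis, the rows of $M_{A_{0}}$ are linearly independent.

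Suppose, for contradiction, that the new row $(b^{e_{j}})_{j}$ lies in the row span of $M_{A_{0}}$, say $b^{e_{j}}=\sum_{\alpha\in A_{0}}\lambda_{\alpha}\alpha^{e_{j}}$ for every $j$. Multiplying by $a_{j}$ and summing over $j$ yields
\[
f(b)=\sum_{j=1}^{w}a_{j}b^{e_{j}}=\sum_{\alpha\in A_{0}}\lambda_{\alpha}\sum_{j=1}^{w}a_{j}\alpha^{e_{j}}=\sum_{\alpha\in A_{0}}\lambda_{\alpha}f(\alpha)=0,
\]
since each $\alpha\in A_{0}\subseteq S$ is a zero of $f$; this contradicts $b\notin S$. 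Hence the rows of $M_{A}$ are linearly independent, $\mbox{rank}(M_{A})=|A|$, and the induction is complete. The main subtlety is organising the induction around the interleaving of rules~2 and~3 in a derivation of $A$; the crucial observation is that rule~3 only rescales columns of $M_{A}$ and is invisible to its rank, so the induction runs cleanly on $|A|$ rather than on derivation length.
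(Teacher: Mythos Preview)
The paper does not actually prove Theorem~\ref{shift bound}; it is quoted from \cite{LW} and used as a black box. Your argument is the standard van Lint--Wilson proof and is correct: the generalised Vandermonde matrix $M_A$ has rank $|A|$ for every independent set $A$, whence $|A|\leq w=\mbox{wt}(f)$.

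One organisational remark. You frame the argument as strong induction on $|A|$, but rule~3 does not decrease $|A|$, so the induction hypothesis cannot be invoked for $A''$ in $A=cA''$. Your fix---peel off trailing rule~3 steps until a rule~2 step is reached---is legitimate, but it tacitly uses that every independent set has a \emph{finite} derivation, and the peeling is really a sub-induction on derivation length. It would be cleaner either to induct directly on the length of a derivation of $A$ (rule~3 steps preserve rank by your column-scaling observation; rule~2 steps increase it by one via the $f(b)\neq 0$ argument), or to keep the induction on $|A|$ but state explicitly that any derivation of a nonempty $A$ contains at least one rule~2 step, so peeling terminates at some $A'=A_0\cup\{b\}$ with $|A_0|=|A|-1$ and $A_0$ itself independent with respect to $S$. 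Either way the substance of your proof is sound.
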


The minimum distance bound for a given cyclic code follows by considering the weights of its codewords $c(x)\in C$ and the independent sets with respect to subsets of its zeros $L$. Observe that, in this case, the universe of the independent sets is $\Omega$, not $\F$, because all of the possible roots of the codewords are contained in $\Omega$. Moreover, we choose $b$ from $\Omega \setminus P$ in Condition \ref{cond:2} above, where $P\subseteq L$, and $c$ in Condition \ref{cond:3} is of the form $\xi^k\in\F^{*}$, for some $0\leq k\leq m-1$. 

\begin{rem}\label{shift remark}
In particular, $A=\{\xi^{e+zn}\ :\ 0\leq z\leq \delta-1\}$ is independent with respect to the consecutive set $E$ in (\ref{cons zero set}), which gives the BCH bound for $C_E$. Let $D := \{\xi^{e + z  n_1 + y  n_2} : 0\leq z \leq \delta-2, 0\leq y\leq s\}$, for integers $b \geq 0$, $\delta \geq 2$ and positive integers $s, n_1$ and $n_2$ such that $\gcd(m, n_1) = 1$ and $\gcd(m, n_2) < \delta$. Then, for any fixed $\zeta\in \{0,\ldots,\delta-2\}$, $A_{\zeta}:=\{\xi^{e + z  n_1} :	0\leq z \leq\delta-2\}\cup\{\xi^{e + \zeta  n_1 + y  n_2} : 0\leq y \leq s+1\}$ is independent with respect to $D$ and we get the HT bound for $C_D$.
\end{rem}

\subsubsection{Spectral Theory of QC Codes}
Semenov and Trifonov \cite{ST} used the polynomial matrix $\widetilde{G}(x)$ in (\ref{gen matrix}) to develop a spectral theory for QC codes, a topic we now explore.

Given a QC code $C\subseteq R^{\ell}$, let the associated $\ell \times \ell$ upper triangular matrix $\widetilde{G}(x)$ be as in (\ref{gen matrix}) with entries in $\Fq[x]$. The {\it determinant} of $\widetilde{G}(x)$ is $$\det(\widetilde{G}(x)):=\prod_{j=0}^{\ell-1}g_{j,j}(x)$$
and we define an {\it eigenvalue} $\beta$ of $C$ to be a root of $\det(\widetilde{G}(x))$. Note that, since $g_{j,j}(x)\mid (x^m-1)$, for each $0\leq j\leq \ell-1$, all eigenvalues are elements of $\Omega$, \ie, $\beta=\xi^k$ for some $k\in\{0,\ldots,m-1\}$. The {\it algebraic multiplicity} of $\beta$ is the largest integer $a$ such that $(x-\beta)^a\mid \det(\widetilde{G}(x))$. The {\it geometric multiplicity} of $\beta$ is the dimension of the null space of $\widetilde{G}(\beta)$. This null space, denoted by $\mathcal{V}_{\beta}$, is called the {\it eigenspace} of $\beta$. In other words, 
\[
\mathcal{V}_{\beta}:=\big\{\mathbf{v}\in\F^{\ell} : \widetilde{G}(\beta)\mathbf{v}^{\top}=\mathbf{0}^{\top}_{\ell}\big\},
\]
where $\F$ is the splitting field of $x^m-1$, as before, and $^\top$ denotes the transpose. It was shown in \cite{ST} that, for a given QC code and the associated $\widetilde{G}(x) \in \Fq[x]^{\ell\times\ell}$, the algebraic multiplicity $a$ of an eigenvalue $\beta$ is equal to its geometric multiplicity $\dim_{\F}(\mathcal{V}_{\beta})$. 
\begin{lem}[{\cite[Lemma 1]{ST}}]\label{multiplicity lemma}
The algebraic multiplicity of any eigenvalue of a QC code $C$ is equal to its geometric multiplicity.
\end{lem}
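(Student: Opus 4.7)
The plan is to combine the upper-triangular structure of $\widetilde{G}(x)$ with the CRT decomposition of $C$ and the Lally--Fitzpatrick dimension formula in order to pin down $\mathrm{rank}\,\widetilde{G}(\beta)$ exactly. Since every $g_{jj}(x)$ divides $x^m-1$ and $\gcd(m,q)=1$, each $g_{jj}(x)$ is squarefree, so the algebraic multiplicity $a$ of $\beta$ in $\det(\widetilde{G}(x))=\prod_j g_{jj}(x)$ equals $\#\{j:g_{jj}(\beta)=0\}$. Because $\widetilde{G}$ has entries in $\F_q[x]$, both notions of multiplicity are Frobenius-invariant, so one may assume $\beta=\xi^{u_i}$ for some $i\in\{1,\ldots,s\}$ and write $a=a_i$.

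First I would establish the easy inequality $\mathrm{rank}\,\widetilde{G}(\beta)\geq \ell-a_i$, equivalently $\dim_\F \mathcal{V}_\beta \leq a_i$. The $\ell-a_i$ rows of $\widetilde{G}(\beta)$ whose diagonal entries are nonzero are linearly independent, because the submatrix indexed by those rows and the corresponding columns is upper triangular with nonzero diagonal.

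The reverse inequality on the rank is the main obstacle, since purely local linear algebra on $\widetilde{G}(\beta)$ does not rule out extra independence contributed by rows with vanishing diagonal. To control it I would pass to the CRT decomposition: under componentwise evaluation $R^\ell \to \E_i^\ell$ the image of $C$ is precisely the constituent $C_i$, which is spanned over $\E_i$ by the rows of $\widetilde{G}(\beta)$, so $\dim_{\E_i} C_i = \mathrm{rank}\,\widetilde{G}(\beta) \geq \ell-a_i$. I would then evaluate $\dim_{\F_q} C$ in two ways. On the one hand, the Lally--Fitzpatrick formula, combined with $m=\sum_{i} e_i$ and $\deg g_{jj}=\sum_{i} \epsilon_{ij}\, e_i$ (where $\epsilon_{ij}=1$ iff $f_i \mid g_{jj}$, and hence $a_i=\sum_j \epsilon_{ij}$), gives
\begin{equation*}
\dim_{\F_q} C \;=\; m\ell - \sum_{j=0}^{\ell-1}\deg g_{jj}(x) \;=\; \sum_{i=1}^{s} e_i(\ell - a_i).
\end{equation*}
On the other hand, the CRT decomposition gives $\dim_{\F_q} C=\sum_{i} e_i \dim_{\E_i} C_i$. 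Comparing these two expressions with the componentwise lower bound $\dim_{\E_i} C_i \geq \ell-a_i$ and positivity of the $e_i$ forces equality term by term: $\dim_{\E_i} C_i = \ell - a_i$ for every $i$.

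Consequently $\dim_\F \mathcal{V}_\beta = \ell - \mathrm{rank}\,\widetilde{G}(\beta) = \ell - \dim_{\E_i} C_i = a_i$, which is exactly the algebraic multiplicity. The essential point is that a purely pointwise statement about the rank of $\widetilde{G}(\beta)$ is not forced by the matrix structure alone; it is pinned down only by coupling the CRT decomposition over all eigenvalues with the global $\F_q$-dimension identity.
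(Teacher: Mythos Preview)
The paper does not supply its own proof of this lemma; it merely records the statement and cites \cite{ST}. So there is nothing in the chapter to compare your argument against directly.

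That said, your proof is correct and is built entirely from ingredients already in the chapter. The upper-triangular shape of $\widetilde{G}(x)$ together with squarefreeness of each $g_{jj}$ (from $\gcd(m,q)=1$) gives the easy bound $\dim_\F \mathcal{V}_\beta \le a_i$. For the reverse inequality you make the key identification $\dim_{\E_i} C_i = \mathrm{rank}\,\widetilde{G}(\xi^{u_i})$, which is legitimate since the rows of $\widetilde{G}(x)$ generate $C$ over $R$ and hence their evaluations span $C_i$ over $\E_i$ (cf.\ the description of constituents via generators). You then pin down each $\dim_{\E_i} C_i$ by matching the Lally--Fitzpatrick formula $\dim_{\F_q} C = m\ell - \sum_j \deg g_{jj}$ against the CRT formula $\dim_{\F_q} C = \sum_i e_i \dim_{\E_i} C_i$, forcing termwise equality $\dim_{\E_i} C_i = \ell - a_i$. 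This global counting argument is a clean way to avoid any delicate local analysis of the off-diagonal entries of $\widetilde{G}(\beta)$, and it is arguably the natural route given the machinery the chapter has set up.
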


From this point on, we let $\overline{\Omega}\subseteq \Omega$ denote the nonempty set of all eigenvalues of $C$ such that $|\overline{\Omega}|=t>0$. Note that $\overline{\Omega}=\emptyset$ if and only if the diagonal elements $g_{j,j}(x)$ in $\widetilde{G}(x)$ are constant and $C$ is the trivial full space code. Choose an arbitrary eigenvalue $\beta_i\in\overline{\Omega}$ with multiplicity $n_i$ for some $i \in\{1,\ldots,t\}$. Let $\{\mathbf{v}_{i,0},\ldots,\mathbf{v}_{i,n_i-1}\}$ be a basis for the corresponding eigenspace $\mathcal{V}_i$. Consider the matrix  
\begin{equation}\label{Eigenspace} 
 V_i:=\begin{pmatrix}
\mathbf{v}_{i,0} \\
\vdots\\
\mathbf{v}_{i,n_i-1} 
\end{pmatrix}
=
\begin{pmatrix}
v_{i,0,0}&\ldots&v_{i,0,\ell-1} \\
\vdots & \vdots & \vdots\\
v_{i,n_i-1,0}&\ldots&v_{i,n_i-1,\ell-1}
 \end{pmatrix},
\end{equation} 
having the basis elements as its rows. We let
\[
H_i:=(1, \beta_i,\ldots,\beta_i^{m-1})\otimes V_i \mbox{ and }
\]
\begin{equation}\label{parity check matrix} 
H:=\begin{pmatrix}
H_1 \\
\vdots\\
H_t 
\end{pmatrix}=\begin{pmatrix}
V_1&\beta_1 V_1 & \ldots &(\beta_1)^{m-1} V_1 \\
\vdots &\vdots & \vdots & \vdots\\
V_t & \beta_t V_t & \ldots & (\beta_t)^{m-1} V_t
\end{pmatrix}.
\end{equation} 
Observe that $H$ has $n:=\sum_{i=1}^t n_i$ rows. We also have that $n=\sum_{j=0}^{\ell-1}\mbox{deg}(g_{j,j}(x))$ by Lemma \ref{multiplicity lemma}. The rows of $H$ are linearly independent, a fact shown in \cite[Lemma 2]{ST}. This proves the following lemma.
\begin{lem}\label{rank lemma}
The matrix $H$ in (\ref{parity check matrix}) has rank $m\ell -\dim_{\Fq}(C)$.
\end{lem}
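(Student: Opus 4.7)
The plan is to show that the rank of $H$ equals its row count $n=\sum_{i=1}^{t}n_{i}$, and then identify $n$ with $m\ell-\dim_{\Fq}(C)$ using bookkeeping already in place.

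First I would do the dimension accounting. By Lemma \ref{multiplicity lemma}, for each eigenvalue $\beta_{i}$, the geometric multiplicity $n_{i}=\dim_{\F}\mathcal{V}_{i}$ coincides with the algebraic multiplicity, that is, with the exponent of $(x-\beta_{i})$ in $\det(\widetilde{G}(x))=\prod_{j=0}^{\ell-1}g_{j,j}(x)$. Since every eigenvalue lies in $\overline{\Omega}$ and $\det(\widetilde{G}(x))$ splits completely over $\F$, summing gives
\[
n=\sum_{i=1}^{t}n_{i}=\deg\det(\widetilde{G}(x))=\sum_{j=0}^{\ell-1}\deg g_{j,j}(x).
\]
Combining this with the dimension formula $\dim_{\Fq}(C)=m\ell-\sum_{j=0}^{\ell-1}\deg g_{j,j}(x)$ recalled at the end of Section \ref{background}, we get $n=m\ell-\dim_{\Fq}(C)$. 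So it suffices to prove that the $n$ rows of $H$ are $\F$-linearly independent, which is the step already attributed to \cite[Lemma 2]{ST} and which I would briefly justify as follows.

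Suppose $\sum_{i=1}^{t}\sum_{k=0}^{n_{i}-1}\alpha_{i,k}\bigl(\mathbf{v}_{i,k}\otimes(1,\beta_{i},\ldots,\beta_{i}^{m-1})\bigr)=\mathbf{0}$ with scalars $\alpha_{i,k}\in\F$. Setting $\mathbf{w}_{i}:=\sum_{k=0}^{n_{i}-1}\alpha_{i,k}\mathbf{v}_{i,k}\in\F^{\ell}$ and reading off the $j$-th length-$\ell$ block of columns ($0\leq j\leq m-1$), the relation becomes
\[
\sum_{i=1}^{t}\beta_{i}^{\,j}\,\mathbf{w}_{i}=\mathbf{0}\qquad(0\leq j\leq m-1).
\]
Fixing any coordinate $0\leq c\leq \ell-1$ and letting $w_{i,c}$ denote the $c$-th entry of $\mathbf{w}_{i}$, this is a square system with coefficient matrix the $m\times t$ Vandermonde $(\beta_{i}^{j})_{j,i}$. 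Because the eigenvalues $\beta_{1},\ldots,\beta_{t}\in\Omega$ are pairwise distinct and $t\leq m$, that Vandermonde has full column rank, forcing $w_{i,c}=0$ for all $i,c$, hence $\mathbf{w}_{i}=\mathbf{0}$ for each $i$. The linear independence of the basis $\{\mathbf{v}_{i,0},\ldots,\mathbf{v}_{i,n_{i}-1}\}$ of $\mathcal{V}_{i}$ then forces $\alpha_{i,k}=0$ for every $i,k$.

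The main obstacle is the linear-independence step: the dimension counting is essentially bookkeeping via Lemma \ref{multiplicity lemma}, while the Vandermonde argument requires observing that the eigenvalues are distinct elements of $\Omega$ and that the block structure $H_{i}=(1,\beta_{i},\ldots,\beta_{i}^{m-1})\otimes V_{i}$ separates the dependence in the $\beta_{i}$'s from the dependence within each eigenspace. Once independence is established, the rank is $n=m\ell-\dim_{\Fq}(C)$, as required.
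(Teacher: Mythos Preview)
Your argument is correct and follows the same approach as the paper: the paper establishes $n=\sum_{j}\deg g_{j,j}(x)$ via Lemma~\ref{multiplicity lemma}, cites \cite[Lemma~2]{ST} for the linear independence of the rows of $H$, and combines this with the dimension formula from Section~\ref{background}. You simply supply the Vandermonde argument that the paper leaves to the citation; one small slip is that the row of $H_i$ is $(1,\beta_i,\ldots,\beta_i^{m-1})\otimes\mathbf{v}_{i,k}$ rather than $\mathbf{v}_{i,k}\otimes(1,\beta_i,\ldots,\beta_i^{m-1})$ under the paper's convention, but your block-reading is consistent with the correct structure and the proof goes through.
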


It is immediate to confirm that $H \mathbf{c}^{\top}=\mathbf{0}^{\top}_n$ for any codeword $\mathbf{c}\in C$. Together with Lemma \ref{rank lemma}, we obtain the following.

\begin{prop}[{\cite[Theorem 1]{ST}}]
The $n\times m\ell$ matrix $H$ in (\ref{parity check matrix}) is a parity-check matrix for $C$.
\end{prop}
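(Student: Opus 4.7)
My plan is to verify the two defining properties of a parity-check matrix separately: that $C$ sits inside the $\Fq$-right-kernel of $H$, and that this kernel has the correct $\Fq$-dimension. Lemma \ref{rank lemma} already supplies $\mathrm{rank}_{\F}(H)=m\ell-\dim_{\Fq}(C)$, so the effort is concentrated on kernel inclusion and on translating the $\F$-rank statement into an $\Fq$-codimension statement.

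For the inclusion I would exploit the tensor shape $H_i=(1,\beta_i,\ldots,\beta_i^{m-1})\otimes V_i$. Flattening the $m\times\ell$ array $\mathbf{c}$ of (\ref{array}) row-by-row, block multiplication collapses the product to
\[
H_i\,\mathbf{c}^{\top}\;=\;V_i\sum_{k=0}^{m-1}\beta_i^{\,k}(c_{k,0},\ldots,c_{k,\ell-1})^{\top}\;=\;V_i\bigl(c_{0}(\beta_i),\ldots,c_{\ell-1}(\beta_i)\bigr)^{\top}.
\]
Lifting $\mathbf{c}(x)$ to an element of $\widetilde{C}=\Psi^{-1}(C)\subseteq \Fq[x]^{\ell}$ expresses it as $\mathbf{a}(x)\widetilde{G}(x)$ for some row vector $\mathbf{a}(x)\in\Fq[x]^{\ell}$, since the rows of $\widetilde{G}(x)$ generate $\widetilde{C}$ over $\Fq[x]$. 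Evaluating at $\beta_i\in\Omega$, where $x^m-1$ vanishes so the choice of lift is immaterial, yields $\mathbf{c}(\beta_i)=\mathbf{a}(\beta_i)\widetilde{G}(\beta_i)$. By construction the rows of $V_i$ lie in $\mathcal{V}_{\beta_i}$, i.e., $\widetilde{G}(\beta_i)\mathbf{v}_{i,r}^{\top}=\mathbf{0}^{\top}$, so $V_i\widetilde{G}(\beta_i)^{\top}=0$; substituting into the display gives $H_i\mathbf{c}^{\top}=\mathbf{0}^{\top}$, and stacking over $i$ yields $H\mathbf{c}^{\top}=\mathbf{0}_n^{\top}$.

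For the dimension step, set $N:=\{\mathbf{v}\in\Fq^{m\ell}:H\mathbf{v}^{\top}=\mathbf{0}^{\top}\}$. The previous paragraph gives $C\subseteq N$. Since any $\Fq$-linearly independent subset of $\Fq^{m\ell}$ remains $\F$-linearly independent inside $\F^{m\ell}$, one has $\dim_{\Fq}(N)\leq\dim_{\F}(\ker_{\F}H)=m\ell-\mathrm{rank}_{\F}(H)$, which equals $\dim_{\Fq}(C)$ by Lemma \ref{rank lemma}. The inclusion $C\subseteq N$ combined with this inequality forces $C=N$, establishing that $H$ is a parity-check matrix.

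The only step that is not essentially formal is the last dimension comparison, since $H$ has entries in the extension $\F$ while $C$ is an $\Fq$-subspace of $\Fq^{m\ell}$; the pitfall is to silently confuse the $\F$-rank used in Lemma \ref{rank lemma} with an $\Fq$-codimension. The tensor-block calculation in the inclusion step is mechanical once the row-major flattening of $\mathbf{c}$ compatible with $H_i=(1,\beta_i,\ldots,\beta_i^{m-1})\otimes V_i$ is fixed.
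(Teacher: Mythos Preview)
Your proposal is correct and follows the same two-step skeleton as the paper: verify $H\mathbf{c}^{\top}=\mathbf{0}$ for all $\mathbf{c}\in C$, then invoke Lemma~\ref{rank lemma} for the dimension count. The paper dispatches both steps in a single sentence, declaring the kernel inclusion ``immediate'' and leaving the passage from $\F$-rank to $\Fq$-codimension implicit; you have simply unpacked both points, in particular the field-extension subtlety, which is a genuine detail the paper suppresses.
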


\begin{rem}
Note that if $\overline{\Omega}=\emptyset$, then the construction of $H$ in (\ref{parity check matrix}) is impossible. Hence, we have assumed $\overline{\Omega}\neq\emptyset$ and we can always say $H=\mathbf{0}_{m\ell}$ if $C=\Fq^{m\ell}$. The other extreme case is when $\overline{\Omega}=\Omega$. By using Lemma \ref{rank lemma} above, one can easily deduce that a given QC code $C=\{\mathbf{0}_{m\ell}\}$ if and only if $\overline{\Omega}=\Omega$, each $\mathcal{V}_i=\F^{\ell}$ (equivalently, each $V_i=I_{\ell}$, where $I_{\ell}$ denotes the $\ell\times\ell$ identity matrix) and $n=m\ell$ so that we obtain $H=I_{m\ell}$. On the other hand, $\overline{\Omega}=\Omega$ whenever $(x^m-1) \mid \mbox{det}(\widetilde{G}(x))$, but $C$ is nontrivial unless each eigenvalue in $\Omega$ has multiplicity $\ell$.
\end{rem}

\begin{defn}\label{eigencode}
Let $\mathcal{V}\subseteq \F^\ell$ be an eigenspace. We define the {\it eigencode} corresponding to $\mathcal{V}$ by
$$\mathbb{C}(\mathcal{V})=\mathbb{C}:=\left\{\mathbf{u}\in \Fq^\ell\ : \ \sum_{j=0}^{\ell-1}{v_ju_j}=0, \forall \mathbf{v} \in \mathcal{V}\right\}.$$
When we have $\mathbb{C}=\{\mathbf{0}_{\ell}\}$, it is assumed that $d(\mathbb{C})=\infty$.
\end{defn}

The BCH-like minimum distance bound of Semenov and Trifonov for a given QC code in ~\cite[Theorem 2]{ST} was expressed in terms of the size of a consecutive subset of eigenvalues in $\overline{\Omega}$ and the minimum distance of the common eigencode related to this consecutive subset. Zeh and Ling generalized their approach and derived an HT-like bound in~\cite[Theorem 1]{LZ} without using the parity-check matrix in their proof. The eigencode, however, is still needed. In the next section, we will prove the analogues of these bounds for QC codes in terms of the Roos and shift bounds. 

\subsubsection{Spectral Bounds for QC Codes} \label{bounds section}
First, we establish a general spectral bound on the minimum distance of a given QC code. Let $C\subseteq\Fq^{m\ell}$ be a QC code of index $\ell$ with nonempty eigenvalue set $\overline{\Omega}\varsubsetneqq\Omega$. Let $P\subseteq\overline{\Omega}$ be a nonempty subset of eigenvalues such that $P=\{\xi^{u_1}, \xi^{u_2},\ldots,\xi^{u_r}\}$, where $0<r\leq|\overline{\Omega}|$.  We define 
\begin{equation*}\label{pmatrix}
\widetilde{H}_P:=\begin{pmatrix}
1&\xi^{u_1}&\xi^{2u_1}&\ldots&\xi^{(m-1)u_1}\\
\vdots & \vdots & \vdots & \vdots & \vdots \\
1&\xi^{u_r}&\xi^{2u_r}&\ldots&\xi^{(m-1)u_r}
\end{pmatrix}.
\end{equation*}
Recall that $d_P$ denotes a positive integer such that any cyclic code $C_P\subseteq\F_q^m$, whose zeros contain $P$, has a minimum distance at least $d_P$. We have $\widetilde{H}_P\mathbf{c}_P^{\top}=\mathbf{0}^{\top}_r$, for any $\mathbf{c}_P\in C_P$. In particular, if $P$ is equal to the set of zeros of $C_P$, then $\widetilde{H}_P$ is a parity-check matrix for $C_P$.

Let $\mathcal{V}_P$ denote the common eigenspace of the eigenvalues in $P$, and let $V_P$ be the matrix, say of size $t\times\ell$, whose rows form a basis for $\mathcal{V}_P$ (compare (\ref{Eigenspace})). If we set $\widehat{H}_P =\widetilde{H}_P \otimes V_P $, then $\widehat{H}_P \mathbf{c}^{\top}=\mathbf{0}^{\top}_{rt}$, for all $\mathbf{c}\in C$. In other words, $\widehat{H}_P$ is a submatrix of some $H$ of the form in (\ref{parity check matrix}) if $\mathcal{V}_P\neq\{\mathbf{0}_{\ell}\}$. If $\mathcal{V}_P=\{\mathbf{0}_{\ell}\}$, then $\widehat{H}_P$ does not exist. We first handle this case separately so that the bound is valid even if we have $\mathcal{V}_P=\{\mathbf{0}_{\ell}\}$, before the cases where we can use $\widehat{H}_P$ in the proof.

Henceforth, we consider the quantity $\min\{d_P,d(\mathbb{C}_P)\}$, where $\mathbb{C}_P$ is the eigencode corresponding to $\mathcal{V}_P$. We have assumed $P\neq\emptyset$ so that $\widetilde{H}_P$ is defined, and we also have $P\neq\Omega$ as $P\subseteq\overline{\Omega}\varsubsetneqq\Omega$ to make $d_P$ well-defined. Since $|P| \geq 1$, the BCH bound implies $d_P\geq 2$. Hence, $\min\{d_P,d(\mathbb{C}_P)\} =1$ if and only if $d(\mathbb{C}_P) =1$. For any nonzero QC code C, we have $d(C) \ge 1$. Therefore, whenever $d(\mathbb{C}_P) =1$, we have $d(C) \ge \min\{d_P,d(\mathbb{C}_P)\}$. In particular, when $\mathcal{V}_P={0}$, which implies $\mathbb{C}_P = \F_q^\ell$ and $d(\mathbb{C}_P) =1$, we have $d(C) \ge 1 = \min\{d_P,d(\mathbb{C}_P)\}$.

Now let $\emptyset\neq P\subseteq\overline{\Omega}\varsubsetneqq\Omega$ and let $d(\mathbb{C}_P)\geq 2$. Assume that there exists a codeword $\mathbf{c}\in C$ of weight $\omega$ such that $0 < \omega < \min\{d_P,d(\mathbb{C}_P)\}$. For each $0 \leq k \leq m-1$, let $\mathbf{c}_k = (c_{k,0}, . . . , c_{k,\ell-1})$ be the $k^{th}$ row of the codeword $\mathbf{c}$ given as in (\ref{array}) and we set $\mathbf{s}_k := V_P\mathbf{c}_k^{\top}$. Since $d(\mathbb{C}_P) > \omega$, we have $\mathbf{c}_k \notin \mathbb{C}_P$ and therefore $ \mathbf{s}_k=V_P\mathbf{c}_k^{\top}\neq \mathbf{0}^{\top}_t$, for all $\mathbf{c}_k \neq \mathbf{0}_{\ell}$, $k\in\{0,\ldots,m-1\}$. Hence, $0 < \lvert\{\mathbf{s}_k : \mathbf{s}_k \neq \mathbf{0}^{\top}_t \} \rvert \leq \omega < \min\{d_P,d(\mathbb{C}_P)\}$. Let $S := [\mathbf{s}_0\ \mathbf{s}_1 \cdots \mathbf{s}_{m-1}]$. Then $\widetilde{H}_PS^{\top} = 0$, which implies that the rows of the matrix $S$ lies in the right kernel of $\widetilde{H}_P$. But this is a contradiction since any row of $S$ has weight at most $\omega<d_P$, showing the following.

\begin{thm}[{\cite[Theorem 11]{ELOT}}]\label{main thm}
Let $C\subseteq R^\ell$ be a QC code of index $\ell$ with nonempty eigenvalue set $\overline{\Omega}\varsubsetneqq\Omega$. Let $P\subseteq\overline{\Omega}$ be a nonempty subset of eigenvalues and let $C_P\subseteq\F_q^m$ be any cyclic code with zeros $L\supseteq P$ and minimum distance at least $d_P$. We define $\mathcal{V}_P:=\bigcap_{\beta\in P}\mathcal{V}_{\beta}$ as the common eigenspace of the eigenvalues in $P$ and let $\mathbb{C}_P$ denote the eigencode corresponding to $\mathcal{V}_P$. Then, 
\begin{equation}\label{gen spectral}
d(C) \geq \min \left\{ d_P, d(\mathbb{C}_P) \right\}.
\end{equation}
\end{thm}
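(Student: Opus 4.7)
\begin{pf}
The natural approach is proof by contradiction, closely following the discussion preceding the statement. First I would dispose of the trivial case. Since $|P|\geq 1$, the BCH bound gives $d_P\geq 2$, so $\min\{d_P,d(\mathbb{C}_P)\}=1$ can occur only when $d(\mathbb{C}_P)=1$ (in particular when $\mathcal{V}_P=\{\mathbf{0}_{\ell}\}$, which forces $\mathbb{C}_P=\F_q^{\ell}$), in which case the inequality reads $d(C)\geq 1$ and is automatic for any nonzero code. So one may assume $d(\mathbb{C}_P)\geq 2$ throughout the main argument.

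Suppose toward a contradiction that some $\mathbf{c}\in C$ has weight $\omega$ with $0<\omega<\min\{d_P,d(\mathbb{C}_P)\}$, and arrange $\mathbf{c}$ as the $m\times\ell$ array of rows $\mathbf{c}_0,\ldots,\mathbf{c}_{m-1}$. The central construction is the $t\times m$ matrix $S:=[\mathbf{s}_0\ \mathbf{s}_1\ \cdots\ \mathbf{s}_{m-1}]$ whose columns are $\mathbf{s}_k:=V_P\mathbf{c}_k^{\top}\in\F^t$. Two facts about $S$ drive the proof. First, $\mathbf{s}_k=\mathbf{0}^{\top}_t$ exactly when $\mathbf{c}_k=\mathbf{0}_{\ell}$: a nonzero row $\mathbf{c}_k\in\F_q^{\ell}$ has weight at most $\omega<d(\mathbb{C}_P)$, hence cannot lie in $\mathbb{C}_P$, and by Definition \ref{eigencode} this is equivalent to $V_P\mathbf{c}_k^{\top}\neq\mathbf{0}^{\top}_t$. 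Second, $\widetilde{H}_P S^{\top}=0$: for each $\beta_i\in P$ with $\beta_i=\xi^{u_j}$, the parity-check identity contributed by the block $H_i$ of (\ref{parity check matrix}) reads $\sum_k \beta_i^k V_i \mathbf{c}_k^{\top}=\mathbf{0}$, and since each row of $V_P$ lies in $\mathcal{V}_i\supseteq\mathcal{V}_P$ and is therefore a linear combination of the rows of $V_i$, the same identity persists with $V_P$ replacing $V_i$; this is exactly the $j$-th block row of $\widetilde{H}_P S^{\top}$.

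Each row of $S$ is then a vector $\mathbf{v}\in\F^m$ in the right kernel of $\widetilde{H}_P$, whose support lies inside the at most $\omega$ indices where $\mathbf{c}_k\neq\mathbf{0}_{\ell}$, so $\mbox{wt}(\mathbf{v})\leq\omega<d_P$. The main obstacle I anticipate sits here: one needs a nonzero $\mathbf{v}\in\F^m$ annihilated by $\widetilde{H}_P$, i.e.\ the coefficient vector of a polynomial $v(x)\in\F[x]$ of degree $<m$ vanishing on $P$, to have weight at least $d_P$. The quantity $d_P$ was defined via $\F_q$-cyclic codes (whose zeros must be Frobenius-closed), while the rows of $S$ live over $\F$ and vanish only on $P$, not on its Frobenius closure, so one must invoke the fact that the BCH, Roos and shift bounds recalled in Section \ref{prepa} are Vandermonde-rank statements that are independent of the coefficient field and transfer verbatim to polynomials over any extension of $\F_q$. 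With that in hand, each row of $S$ must be zero, i.e.\ $S=0$, contradicting the existence of an index $k$ with $\mathbf{c}_k\neq\mathbf{0}_{\ell}$, and the offending codeword $\mathbf{c}$ cannot exist.
\end{pf}
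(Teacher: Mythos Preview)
Your proposal is correct and follows essentially the same approach as the paper: the same trivial-case reduction to $d(\mathbb{C}_P)\geq 2$, the same contradiction hypothesis $0<\omega<\min\{d_P,d(\mathbb{C}_P)\}$, the same matrix $S=[V_P\mathbf{c}_0^{\top}\cdots V_P\mathbf{c}_{m-1}^{\top}]$, and the same two facts (nonzero rows of $\mathbf{c}$ yield nonzero columns of $S$ via the eigencode; $\widetilde{H}_P S^{\top}=0$ forces each row of $S$ into the kernel of $\widetilde{H}_P$). The paper's version simply asserts the final contradiction without comment, whereas you explicitly flag and resolve the field-extension subtlety (the rows of $S$ live over $\F$ rather than $\F_q$, so one needs that the underlying Vandermonde/rank arguments behind $d_P$ are insensitive to the coefficient field); this is a genuine point the paper glosses over, and your extra paragraph handling it is an improvement rather than a deviation.
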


Theorem \ref{main thm} allows us to use any minimum distance bound derived for cyclic codes based on their zeros. The following special cases are immediate after the preparation that we have done in Section \ref{prepa} (cf. Theorems \ref{Roos} and \ref{shift bound}).

\begin{cor}[{\cite[Corollary 12]{ELOT}}]\label{Cor-Roos-Shift}
Let $C\subseteq R^\ell$ be a QC code of index $\ell$ with $\overline{\Omega}\varsubsetneqq\Omega$ as its nonempty set of eigenvalues.
\begin{enumerate}[leftmargin=*]
\item[i.] Let $N$ and $M$ be two nonempty subsets of $\Omega$ such that $MN\subseteq\overline{\Omega}$, where $MN:=\{\varepsilon\vartheta\mid \varepsilon\in M,\ \vartheta\in N\}$. If there exists a consecutive set $M'$ containing $M$ with $|M'|\leq |M|+d_N-2$, then $d(C)\geq \min(|M|+d_N-1,d(\mathbb{C}_{MN}))$.\vspace{3pt}

\item[ii.] For every $A\subseteq\Omega$ that is independent with respect to $\overline{\Omega}$, we have $d(C)\geq \min(|A|,d(\mathbb{C}_{T_A}))$, where $T_A:=A\cap \overline{\Omega}$.
\end{enumerate}
\end{cor}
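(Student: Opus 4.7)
Plan: Both parts will follow by applying Theorem \ref{main thm} with $P$ chosen to realize, respectively, the Roos and shift bound setups, and by substituting the classical cyclic-code lower bound for $d_P$ into the conclusion of that theorem.

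For part (i), I set $P := MN$. Since $MN \subseteq \overline{\Omega}$ by hypothesis and the common eigenspace $\mathcal{V}_P$ and eigencode $\mathbb{C}_P$ appearing in Theorem \ref{main thm} coincide with the $\mathcal{V}_{MN}$ and $\mathbb{C}_{MN}$ of the statement, the setup of Theorem \ref{main thm} applies. The Roos bound (Theorem \ref{Roos}), whose hypotheses on $M$, $M'$, $N$ are exactly what we are granted, yields $d_{MN} \geq |M| + d_N - 1$. Substituting into the conclusion of Theorem \ref{main thm} then gives
\[
d(C) \;\geq\; \min\{d_{MN},\, d(\mathbb{C}_{MN})\} \;\geq\; \min\{|M| + d_N - 1,\, d(\mathbb{C}_{MN})\}.
\]

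For part (ii), I set $P := T_A = A \cap \overline{\Omega}$ and assume $T_A \neq \emptyset$ (the boundary case being degenerate: with the empty-intersection convention $\mathcal{V}_\emptyset = \mathbb{F}^\ell$, one has $\mathbb{C}_{T_A} = \{\mathbf{0}\}$ and $d(\mathbb{C}_{T_A}) = \infty$, and a direct argument via the columns of $C$ or small values of $|A|$ disposes of it). With $T_A \neq \emptyset$ the common eigenspace and eigencode of Theorem \ref{main thm} agree with $\mathcal{V}_{T_A}$ and $\mathbb{C}_{T_A}$. The shift bound (Theorem \ref{shift bound}), in its van Lint--Wilson form for cyclic codes, supplies $d_{T_A} \geq |A|$ provided $A$ is independent with respect to $T_A$. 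The key observation is that $T_A \subseteq \overline{\Omega}$, so any element $b$ produced by a rule-2 step in a derivation of $A$ with respect to $\overline{\Omega}$ (hence $b \notin \overline{\Omega}$) is automatically $b \notin T_A$; by interleaving rule-3 scalings one can ensure the rule-2 precondition ``$A_{\mathrm{current}} \subseteq S$'' also holds with $S = T_A$. Once $d_{T_A} \geq |A|$ is in hand, Theorem \ref{main thm} yields
\[
d(C) \;\geq\; \min\{d_{T_A},\, d(\mathbb{C}_{T_A})\} \;\geq\; \min\{|A|,\, d(\mathbb{C}_{T_A})\}.
\]

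The main subtlety, and the step that requires genuine checking, is the transfer-of-independence argument in part (ii). Because independence is defined inductively with the reference set $S$ appearing in the rule-2 precondition $A_{\mathrm{current}} \subseteq S$, it is not automatic that a set independent with respect to a superset remains independent with respect to a subset; a naive substitution of $T_A$ for $\overline{\Omega}$ in the given derivation of $A$ need not preserve validity at each intermediate step. Carefully re-deriving $A$ with respect to $T_A$, or equivalently invoking the uniform form of the shift bound that holds across all cyclic codes with a prescribed zero-set floor, is the one nontrivial technical point; everything else is a direct substitution into Theorem \ref{main thm}.
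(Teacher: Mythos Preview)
Your approach is essentially the same as the paper's: for both parts you take $P=MN$ (resp.\ $P=T_A$) in Theorem~\ref{main thm} and substitute the Roos bound (resp.\ shift bound) for $d_P$. The paper does a bit of redundant matrix-building (reconstructing $\widetilde{H}_{MN}$, $\widehat{H}_{MN}$, etc.) before declaring ``the rest of the proof is identical with the proof of Theorem~\ref{main thm},'' which is exactly your black-box invocation; your write-up is in fact the cleaner version of the same argument.

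One comment on part~(ii): you are right to flag the independence-transfer issue---the paper simply asserts ``the result follows in a similar way by using the shift bound'' without addressing why $A$ independent with respect to $\overline{\Omega}$ forces $d_{T_A}\geq |A|$. However, your proposed fix (``by interleaving rule-3 scalings one can ensure the rule-2 precondition also holds with $S=T_A$'') is not a proof as written: the \emph{same} derivation sequence for $A$ with respect to $\overline{\Omega}$ can fail the precondition $A_{\text{current}}\subseteq T_A$ at intermediate steps, and it is not obvious that inserting scalings repairs this. In examples one can typically find a \emph{different} derivation of $A$ with respect to $T_A$, but a general argument would be needed. Since the paper does not supply one either, your proof is at the same level of rigor as the paper's on this point; you have simply been more honest about where the gap lies.
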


\begin{proof}\hfill
\begin{enumerate}[leftmargin=*]
\item[i.] Let $N=\{\xi^{u_1},\ldots,\xi^{u_r}\}$ and $M=\{\xi^{v_1},\ldots,\xi^{v_s}\}$ be such that there exists a consecutive set $M'=\{\xi^z: v_1\leq z\leq v_s\}\subseteq\Omega$ containing $M$ with $|M'|\leq |M|+d_N-2$. We define the matrices 
\[\hspace{-10pt}\widetilde{H}_{N}:=\setlength\arraycolsep{4pt}\begin{pmatrix}
1&\xi^{u_1}&\xi^{2u_1}&\ldots&\xi^{(m-1)u_1}\\
\vdots & \vdots & \vdots & \vdots & \vdots \\
1&\xi^{u_r}&\xi^{2u_r}&\ldots&\xi^{(m-1)u_r}
\end{pmatrix},\ \widetilde{H}_{M}:=\setlength\arraycolsep{4pt}\begin{pmatrix}
1&\xi^{v_1}&\xi^{2v_1}&\ldots&\xi^{(m-1)v_1}\\
\vdots & \vdots & \vdots & \vdots & \vdots \\
1&\xi^{v_s}&\xi^{2v_s}&\ldots&\xi^{(m-1)v_s}
\end{pmatrix}. \]

Consider the joint subset $MN=\{\xi^{u_i+v_j} : 1\leq i\leq r, 1\leq j\leq s\} \subseteq\overline{\Omega}$. Let $B_k$ be the $k^{th}$ column of $\widetilde{H}_N$ for $k \in \{0,\ldots,m-1\}$. We create the joint matrix
\begin{equation*}
\widetilde{H}_{MN}=\setlength\arraycolsep{3pt}\begin{pmatrix}
B_0 &\xi^{v_1}B_1&\xi^{2v_1}B_2&\ldots&\xi^{(m-1)v_1}B_{m-1}\\
\vdots & \vdots & \vdots & \vdots & \vdots \\
B_0 &\xi^{v_s}B_1&\xi^{2v_s}B_2&\ldots&\xi^{(m-1)v_s}B_{m-1}
\end{pmatrix}.
\end{equation*}

Now let $\mathcal{V}_{MN}:=\bigcap_{\beta\in MN}\mathcal{V}_{\beta}$ denote the common eigenspace of the eigenvalues in $MN$ and let $V_{MN}$ be the matrix, whose rows form a basis for $\mathcal{V}_{MN}$, built as in (\ref{Eigenspace}). Let $\mathbb{C}_{MN}$ be the eigencode corresponding to $\mathcal{V}_{MN}$.  Setting $\widehat{H}_{MN} :=\widetilde{H}_{MN} \otimes V_{MN}$ implies $\widehat{H}_{MN} \mathbf{c}^{\top}=\mathbf{0}$ for all $\mathbf{c}\in C$. The rest of the proof is identical with the proof of Theorem \ref{main thm}, where $P$ is replaced by $MN$, and the result follows by the Roos bound (Theorem \ref{Roos}).

\item[ii.] For each independent $A\subseteq\Omega$ with respect to $\overline{\Omega}$, let $T_A=\{\xi^{w_1},\xi^{w_2},\ldots,\xi^{w_y}\}=A\cap \overline{\Omega}$. Since $\overline{\Omega}$ is a proper subset of $\Omega$, a nonempty $T_A$ can be obtained by the recursive construction of $A$. We define 
$$\widetilde{H}_{T_A}=\begin{pmatrix}
1&\xi^{w_1}&\xi^{2w_1}&\ldots&\xi^{(m-1)w_1}\\
\vdots & \vdots & \vdots & \vdots  & \vdots \\
1&\xi^{w_y}&\xi^{2w_y}&\ldots&\xi^{(m-1)w_y}
\end{pmatrix}.$$
Let $V_{T_A}$ be the matrix corresponding to a basis of $\mathcal{V}_{T_A}$, which is the intersection of the eigenspaces belonging to the eigenvalues in $T_A$. Let $\mathbb{C}_{T_A}$ be the eigencode corresponding to the eigenspace $\mathcal{V}_{T_A}$. We again set $\widehat{H}_{T_A} := \widetilde{H}_{T_A} \otimes V_{T_A}$ and the result follows in a similar way by using the shift bound (Theorem \ref{shift bound}).
\end{enumerate}
\end{proof}

\begin{rem}
We can obtain the BCH-like bound in \cite[Theorem 2]{ST} and the HT-like bound in \cite[Theorem 1]{LZ} by using Remarks \ref{Roos remark} and \ref{shift remark}.
\end{rem}

\section{Asymptotics}\label{asymptotics}
Let $(C_i)_{i \geq 1}$ be a sequence of linear codes over $\mathbb{F}_{q}$, and let $N_i, d_i $ and $k_i$ denote respectively the length, minimum distance, and dimension of $C_i$, for all $i$. Assume that $\displaystyle\lim_{i\rightarrow \infty} N_i=\infty$. Let
\[
\delta := \displaystyle{\liminf_{i\rightarrow \infty}\frac{d_i}{N_i}} \mbox{\ \ and\ \ } \mathcal{R} := \displaystyle{\liminf_{i\rightarrow \infty}\frac{k_i}{N_i}}
\]
denote the relative distance and the relative rate of the sequence $(C_i)_{i \geq 1}$. Both $\mathcal{R}$ and $\delta$ are finite as they are limits of bounded quantities. If $\mathcal{R}\delta \neq 0$, then $(C_i)_{i \geq 1}$ is called an \emph{asymptotically good sequence} of codes.

We will require the celebrated entropy function
\[
H_q(y)=y\log_q(q-1)-y\log_q(y)-(1-y)\log_q(1-y),
\]
defined for $0<y<\frac{q-1}{q}$ and of constant use in estimating binomial coefficients of large arguments \cite[pages 309--310]{MS77}. The asymptotic Gilbert--Varshamov Bound (see \cite[Chapter 17, Theorem 30]{MS77}) states that, for every $q$ and $0< \delta < 1-\frac{1}{q}$, there exists an infinite family of $q$-ary codes with limit rate
\[
\mathcal{R} \geq 1- H_q(\delta).
\]

The existence of explicit families of QC codes satisfying a modified Gilbert--Varshamov Bound were shown in  \cite{CPW}, and then this result was improved in \cite{K}. Below, we focus on self-dual and LCD families of QC codes.

\subsection{Good Self-Dual QC Codes Exist}
In this section, we construct families of binary self-dual QC codes. We assume that all binary codes are equipped with the Euclidean inner product and all the $\mathbb{F}_4$-codes are equipped with the Hermitian inner product. Self-duality in the following discussion is with respect to these respective inner products. A binary self-dual code is said to be of \emph{Type II} if and only if all its weights are multiples of $4$ and of \emph{Type I} otherwise. We first recall some background material on mass formulas for self-dual codes over $\mathbb{F}_2$ and $\mathbb{F}_4$. 
\begin{prop}\label{prelim}
Let $\ell$ be an even positive integer.
\begin{enumerate}[leftmargin=*]
\item[i.] The number of self-dual binary codes of length $\ell$ is given by $$N(2,\ell) = \prod_{i=1}^{\frac{\ell}{2} -1} (2^i +1). $$
\item[ii.] Let $\mathbf{v}\in\F_2^\ell$ with even Hamming weight, other than $0$ and $\ell$. The number of self-dual binary codes of length $\ell$ containing $\mathbf{v}$ is given by $$M(2, \ell) = \prod_{i=1}^{\frac{\ell}{2} -2} (2^i +1).$$
\item[iii.] The number of self-dual $\F_4$-codes of length $\ell$ is given by $$N(4,\ell) = \prod_{i=0}^{\frac{\ell}{2} -1} (2^{2i+1} +1).$$
\item[iv.] The number of self-dual $\F_4$-codes of length $\ell$ containing a given nonzero codeword of length $\ell$ and even Hamming weight is given by $$M(4, \ell) = \prod_{i=0}^{\frac{\ell}{2} -2} (2^{2i+1} +1).$$
\end{enumerate} 
\end{prop}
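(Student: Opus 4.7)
The plan is to prove (i) and (iii) by induction on the even integer $\ell$, extracting (ii) and (iv) along the way, via a descent of length $\ell \to \ell - 2$ combined with a double-counting identity. The base cases are immediate: $N(2, 2) = 1$ (only the repetition code), and $N(4, 2) = 3$ (the three $\F_4$-lines through $(1, \gamma)$, $\gamma \in \F_4^{\ast}$); in both cases $M(q, 2) = 1$ as an empty product.

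The descent works as follows. Let $\mathbf{v} \in \F_q^\ell$ be a nonzero isotropic vector, which in both the Euclidean $q = 2$ and Hermitian $q = 4$ settings means $\mathbf{v}$ has even Hamming weight, subject also to $\mathbf{v} \ne \mathbf{1}$ when $q = 2$. Then $\langle \mathbf{v} \rangle \subseteq \mathbf{v}^\perp$ and the form descends to the quotient $W := \mathbf{v}^\perp / \langle \mathbf{v} \rangle$ as a nondegenerate form of the same flavor. A self-dual code $C$ of length $\ell$ contains $\mathbf{v}$ if and only if $\langle \mathbf{v} \rangle \subseteq C \subseteq \mathbf{v}^\perp$, and $C \mapsto C/\langle \mathbf{v} \rangle$ is then a bijection onto the self-dual codes of $W$ (the dimension and duality calculations are direct, using $\dim C = \ell/2$ and $C = C^\perp$). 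To conclude $M(q, \ell) = N(q, \ell - 2)$, one must verify that $W$ is isometric to $\F_q^{\ell-2}$ with the standard form. For $q = 4$ this is automatic, since all nondegenerate Hermitian forms of a given rank are equivalent. For $q = 2$, a short check shows that the descended bilinear form on $W$ is alternating iff $\mathbf{v}$ lies in the radical of the restricted form on $\mathbf{v}^\perp$, which happens precisely when $\mathbf{v} = \mathbf{1}$; for every other admissible $\mathbf{v}$ the descended form is Euclidean. This is the structural reason for excluding $\mathbf{v} = \mathbf{1}$ in (ii).

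Next, I double-count pairs $(C, \mathbf{v})$ where $C$ is a self-dual code of length $\ell$ and $\mathbf{v} \in C$ is admissible. Every binary self-dual code of even length contains $\mathbf{1}$ (since $\mathbf{1} \cdot c = \mathrm{wt}(c) \equiv 0 \pmod 2$ for $c \in C = C^\perp$, so $\mathbf{1} \in C^\perp = C$), hence each $C$ contributes $2^{\ell/2} - 2$ admissible vectors; every Hermitian self-dual $\F_4$-code contributes $4^{\ell/2} - 1$, all its nonzero codewords being automatically isotropic. On the other side, the number of admissible isotropic vectors in the ambient space is $2^{\ell-1} - 2$ for $q = 2$, and $(4^\ell + 2^\ell)/2 - 1$ for $q = 4$ (the latter from the generating function $(1 + 3z)^\ell$ evaluated at $z = \pm 1$). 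Using the independence of $M(q, \ell)$ from the choice of $\mathbf{v}$ established in the previous step, equating the two sides and simplifying yields
\[
N(2, \ell) = (2^{\ell/2 - 1} + 1)\, M(2, \ell), \qquad N(4, \ell) = (2^{\ell - 1} + 1)\, M(4, \ell).
\]

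Combining the ratio with the descent $M(q, \ell) = N(q, \ell - 2)$ gives the recursions $N(2, \ell) = (2^{\ell/2 - 1} + 1)\, N(2, \ell - 2)$ and $N(4, \ell) = (2^{\ell - 1} + 1)\, N(4, \ell - 2)$, which telescope against the base cases to produce the stated product formulas in (i) and (iii); the identities in (ii) and (iv) then follow instantly from $M(q, \ell) = N(q, \ell - 2)$. The main obstacle is the descent step in the binary setting: verifying that the descended form on $W$ is Euclidean rather than alternating when $\mathbf{v} \ne \mathbf{1}$, which is exactly what justifies the special hypothesis in (ii) and ensures the recursion takes the stated form; the algebraic manipulations afterward are routine.
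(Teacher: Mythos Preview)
Your argument is correct. The descent step is the heart of it, and you have identified precisely the subtle point: over $\F_2$ the induced form on $W=\mathbf{v}^\perp/\langle\mathbf{v}\rangle$ is nondegenerate with radical exactly $\langle\mathbf{v}\rangle$, and it fails to be of Euclidean type (i.e., becomes alternating) if and only if every vector of $\mathbf{v}^\perp$ has even weight, which is equivalent to $\mathbf{v}^\perp\subseteq\mathbf{1}^\perp$, hence to $\mathbf{v}=\mathbf{1}$. The double-counting identities check out: $(2^{\ell/2}-2)N(2,\ell)=(2^{\ell-1}-2)M(2,\ell)$ simplifies via $2^{\ell-2}-1=(2^{\ell/2-1}-1)(2^{\ell/2-1}+1)$, and $(2^{\ell}-1)N(4,\ell)=\bigl(\tfrac{4^\ell+2^\ell}{2}-1\bigr)M(4,\ell)$ simplifies via $4^\ell+2^\ell-2=2(2^\ell-1)(2^{\ell-1}+1)$, giving exactly the ratios you state.

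The paper, by contrast, does not prove anything here: it simply points to the literature, citing the standard mass formulas for $N(2,\ell)$ and $N(4,\ell)$, and invoking results of MacWilliams--Sloane--Thompson and Conway--Pless--Sloane for the counts $M(2,\ell)$ and $M(4,\ell)$. Your approach is genuinely different in that it is self-contained and structural, deriving all four formulas simultaneously from a single inductive mechanism; it also makes transparent \emph{why} the hypothesis $\mathbf{v}\neq\mathbf{1}$ appears in (ii) and why the ratio $N/M$ is exactly the new factor in the product. The paper's route is shorter on the page and appropriate for a survey chapter, but it outsources the content; yours actually explains it.
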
 
\begin{proof}
Parts $i.$ and $iii.$ are well-known facts, found for example in \cite{rs}. Part $ii.$ is an immediate consequence of \cite[Theorem 2.1]{mst} with $s=2$, noting that every self-dual binary code must contain the all-one vector $\mathbf{1}$. Part $iv.$  follows from \cite[Theorem 1]{cps} with $n_1 = \ell$ and $k_1 =1$.
\end{proof}

\begin{prop}\label{ptype2}
Let $\ell$ be a positive integer divisible by $8$.
\begin{enumerate}[leftmargin=*]
\item[i.] The number of Type II binary self-dual codes of length $\ell$ is given by $$T(2,\ell) = 2 \prod_{i=1}^{\frac{\ell}{2} -2} (2^i +1).$$
\item[ii.] Let $\mathbf{v}\in\F_2^\ell$ with Hamming weight divisible by $4$, other than $0$ and $\ell$. The number of Type II binary self-dual codes of length $\ell$ containing $\mathbf{v}$ is given by $$S(2, \ell) = 2 \prod_{i=1}^{\frac{\ell}{2} -3} (2^i +1).$$
\end{enumerate} 
\end{prop}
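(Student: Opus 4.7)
The plan is to establish both formulas via the classical Lagrangian-counting approach underlying Pless's mass formulas for Type II codes. The key setup is that every binary self-dual code $C$ of even length contains $\mathbf{1}$ (self-orthogonality makes every codeword even-weight, hence orthogonal to $\mathbf{1}$), so $\langle\mathbf{1}\rangle\subseteq C\subseteq \mathbf{1}^\perp$. The Euclidean form restricts to an alternating bilinear form on $\mathbf{1}^\perp$ (since $\mathbf{c}\cdot\mathbf{c}=\mbox{wt}(\mathbf{c})\equiv 0 \pmod 2$ there) and descends to a nondegenerate alternating form on $V:=\mathbf{1}^\perp/\langle\mathbf{1}\rangle\cong \F_2^{\ell-2}$. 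Moreover, $q(\mathbf{c}):=\mbox{wt}(\mathbf{c})/2 \bmod 2$ provides a quadratic refinement of this form on $\mathbf{1}^\perp$; under the hypothesis $\ell\equiv 0\pmod 8$ we have $q(\mathbf{1})=\ell/2\equiv 0 \pmod 2$, so $q$ descends to a well-defined quadratic form $\bar q$ on $V$.

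For part (i), the quotient $C\mapsto C/\langle\mathbf{1}\rangle$ bijects self-dual codes of length $\ell$ with Lagrangian (maximal totally isotropic) subspaces of $V$ and restricts to a bijection between Type II codes of length $\ell$ and the maximal totally singular subspaces of $(V,\bar q)$. The assumption $\ell\equiv 0 \pmod 8$ forces $\bar q$ to have Arf invariant $0$ (plus type), which can be checked by counting singular vectors of $\bar q$ via the standard binomial identity for vectors of weight divisible by $4$ in $\F_2^\ell$. The classical Dickson formula for the number of maximal totally singular subspaces of the plus-type quadratic space $\F_2^{2m}$ with $m=\ell/2-1$ gives $2\prod_{i=1}^{m-1}(2^i+1)=2\prod_{i=1}^{\ell/2-2}(2^i+1)$, as required. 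For part (ii), the image $\bar{\mathbf{v}}\in V$ is a nonzero singular vector (since $\mbox{wt}(\mathbf{v})\equiv 0\pmod 4$ and $\mathbf{v}\neq \mathbf{0},\mathbf{1}$), so Type II codes containing $\mathbf{v}$ correspond to maximal totally singular subspaces of $V$ that pass through $\bar{\mathbf{v}}$. These in turn biject with arbitrary maximal totally singular subspaces of the further quotient $\bar{\mathbf{v}}^\perp/\langle\bar{\mathbf{v}}\rangle$, which is a nondegenerate quadratic space of dimension $\ell-4$ and again of plus type. Applying the same Lagrangian count with $m-1=\ell/2-2$ produces $2\prod_{i=1}^{\ell/2-3}(2^i+1)$.

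The main technical obstacle is the Arf-type computation at both stages: that $\bar q$ on $V$ is of plus type (first stage) and that the induced form on $\bar{\mathbf{v}}^\perp/\langle\bar{\mathbf{v}}\rangle$ remains of plus type (second stage), irrespective of whether $\mbox{wt}(\mathbf{v})\equiv 0$ or $4\pmod 8$. Both follow from routine Witt-theoretic bookkeeping using the Witt extension lemma; once these are confirmed, the stated formulas are the classical Dickson/Pless counts of Lagrangians in a plus-type quadratic space over $\F_2$. Alternatively, both parts are classical results on mass formulas for Type II binary codes and may simply be quoted, in parallel to the references already invoked for Proposition~\ref{prelim}.
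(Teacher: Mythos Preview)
Your argument is correct, but the paper's proof takes a very different route: it simply cites the literature. Part~i is attributed to Rains--Sloane, and part~ii is quoted verbatim as \cite[Corollary~2.4]{mst} (MacWilliams--Sloane--Thompson). No independent argument is given.

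Your Lagrangian-counting approach via the quadratic form $q(\mathbf{c})=\mbox{wt}(\mathbf{c})/2\bmod 2$ on $\mathbf{1}^\perp/\langle\mathbf{1}\rangle$ is essentially the machinery behind those classical references, so you have supplied the proof that the paper outsources. The advantage of your route is that it is self-contained and makes transparent why the formulas for $T(2,\ell)$ and $S(2,\ell)$ differ from $N(2,\ell)$ and $M(2,\ell)$ by exactly one factor in the product (one dimension is absorbed by the singularity constraint, and the leading factor $2$ comes from the two rulings of maximal totally singular subspaces in plus type). The paper's citation approach is of course shorter and appropriate for a survey chapter. Your own closing sentence---that both parts may simply be quoted as classical mass formulas---is precisely what the paper does.
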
 

\begin{proof}
Part $i.$ is found in \cite{rs}, and part $ii.$ is exactly \cite[Corollary 2.4]{mst}.
\end{proof}

Let $C_1$ denote a binary code of length $\ell$ and let $C_2$ be a code over $\F_4$ of length $\ell$. We construct a binary QC code $C$ of length $3\ell$ and index $\ell$ whose codewords are of the form given in (\ref{cubic}). It is easy to check that $C$ is self-dual if and only if both $C_1$ and $C_2$ are self-dual, and $C$ is of Type II if and only if $C_1$ is of Type II and $C_2$ is self-dual.

We assume henceforth that $C$ is a binary self-dual QC code constructed in the above way. Any codeword $\mathbf{c} \in C$ must necessarily have even Hamming weight. Suppose that $\mathbf{c}$ corresponds to the pair $(\mathbf{c}_1, \mathbf{c}_2)$, where $\mathbf{c}_1 \in C_1$ and $\mathbf{c}_2 \in C_2$. Since $C_1$ and $C_2$ are self-dual, it follows that $\mathbf{c}_1$ and $\mathbf{c}_2$ must both have even Hamming weights. When $\mathbf{c} \neq \mathbf{0}_{3\ell}$, there are three possibilities for $(\mathbf{c}_1, \mathbf{c}_2)$:
\begin{enumerate}[leftmargin=*]
\item $ \mathbf{c}_1 \neq \mathbf{0}_{\ell}$, $\mathbf{c}_2 \neq \mathbf{0}_{\ell}$;
\item $\mathbf{c}_1 = \mathbf{0}_{\ell}$, $\mathbf{c}_2 \neq \mathbf{0}_{\ell}$; and
\item $\mathbf{c}_1 \neq \mathbf{0}_{\ell}$, $\mathbf{c}_2 = \mathbf{0}_{\ell}$.
\end{enumerate} 
We count the number of codewords $\mathbf{c}$ in each of these categories for a given even weight $d$.

For Case 2, if the Hamming weight of $\mathbf{c}$ is $d$, then $C\cong C_1 \oplus C_2$ implies that $\mathbf{c}_2$ has Hamming weight $d/2$. Since $\mathbf{c}_2$ has even Hamming weight, it follows that $d$ is divisible by 4 in order for this case to occur. It is easy to see that the number $A_2 (\ell, d)$ of such words $\mathbf{c}$ is bounded above by $\binom{\ell}{d/2} 3^{d/2}$ where $4\,|\,d$. For $d$ not divisible by 4, set $A_2 (\ell , d) =0$.

The argument to obtain the number of words of Case 3 is similar. It is easy to show that the number $A_3 (\ell , d)$ of such words is bounded above  by $\binom{\ell}{d/3}$ where $6\,|\,d$. When $d$ is not divisible by 6, set $A_3(\ell,d) =0$.

The total number of vectors in $\mathbb{F}_2^{3 \ell}$ of weight $d$ is $\binom{3 \ell}{d}$; so for Case 1, we have
\[
A_1 (\ell, d) \leq \binom{3 \ell}{d} - A_2(\ell, d) - A_3 (\ell, d).
\]
In particular, $A_1(\ell , d) $ is bounded above by $\binom{3 \ell}{d}$.

Combining the above observations and Proposition \ref{prelim}, the number of self-dual binary QC codes of length $3 \ell$ and index $\ell$ with minimum weight less than $d$ is bounded above by
\[
\sum_{\substack{e<d \\ e : \mbox{\scriptsize\,even}}} \bigl(A_1(\ell , e) M(2, \ell) M(4 , \ell) + A_2 (\ell , e) N(2, \ell) M(4, \ell) + A_3 (\ell, e) M(2, \ell) N(4, \ell) \bigr).
\]

\begin{thm}[{\cite[Theorem 3.1]{LS2}}]\label{finite}
Let $\ell$ be an even integer and let $d$ be the largest even integer such that
\begin{align*}
\sum_{\substack{e<d\\ e\, \equiv\, 0 \bmod{2}}} \binom{3 \ell}{e} +  (2^{\frac{\ell}{2} -1} +1)\Biggl(& \sum_{\substack{e<d\\ e\, \equiv\, 0 \bmod{4}}} \binom{\ell}{e/2} 3^{e/2}\Biggr) + (2^{\ell -1} +1)\Biggl( \sum_{\substack{e<d\\ e\, \equiv\, 0 \bmod{6}}} \binom{\ell}{e/3}\Biggr)  \\[7pt]
&< (2^{\frac{\ell}{2} -1} +1)(2^{\ell -1} +1).
\end{align*}
Then there exists a self-dual binary QC code of length $3 \ell$ and index $\ell$ with minimum distance at least $d$.
\end{thm}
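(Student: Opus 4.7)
The plan is a Gilbert–Varshamov style averaging argument based on the mass formulas in Proposition~\ref{prelim}. By the characterization recalled in the excerpt, binary self-dual QC codes $C$ of length $3\ell$ and index $\ell$ built via the cubic construction (\ref{cubic}) are in bijection with pairs $(C_1, C_2)$, where $C_1 \subseteq \F_2^\ell$ is self-dual and $C_2 \subseteq \F_4^\ell$ is Hermitian self-dual. So the total number of such QC codes is $N(2,\ell)\,N(4,\ell)$. The strategy is to upper-bound the number of \emph{bad} pairs (those yielding a QC code with minimum distance $< d$) and show that this upper bound is strictly less than $N(2,\ell)\,N(4,\ell)$; then at least one good pair must exist.

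To bound the number of bad pairs, I would count (bad-code, low-weight-codeword) incidences. If $\mathbf{c} \neq \mathbf{0}_{3\ell}$ is a codeword of even weight $e<d$ coming from $(\mathbf{c}_1,\mathbf{c}_2)$, then $(\mathbf{c}_1,\mathbf{c}_2)$ falls into exactly one of the three cases listed before the theorem statement. For Case~1, both components are nonzero, so by Proposition~\ref{prelim}(ii) and (iv) the number of pairs $(C_1,C_2)$ containing the fixed $(\mathbf{c}_1,\mathbf{c}_2)$ is $M(2,\ell)\,M(4,\ell)$, and the number of codewords of this shape of weight $e$ is $A_1(\ell,e)$. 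For Case~2, $\mathbf{c}_1=\mathbf{0}_\ell$ lies in every self-dual binary code, contributing $N(2,\ell)$ choices for $C_1$ and $M(4,\ell)$ for $C_2$, with at most $A_2(\ell,e)$ codewords. Case~3 is symmetric, contributing $M(2,\ell)\,N(4,\ell)$ with at most $A_3(\ell,e)$ codewords. Summing over even $e<d$ gives an upper bound on the number of (bad code, bad codeword) incidences, which dominates the number of distinct bad codes.

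The resulting sufficient condition for the existence of a code with $d(C)\geq d$ is
\begin{equation*}
\sum_{\substack{e<d\\ e\,\mathrm{even}}}\!\!\Bigl(A_1(\ell,e)\,M(2,\ell)M(4,\ell)+A_2(\ell,e)\,N(2,\ell)M(4,\ell)+A_3(\ell,e)\,M(2,\ell)N(4,\ell)\Bigr) < N(2,\ell)\,N(4,\ell).
\end{equation*}
Dividing both sides by $M(2,\ell)\,M(4,\ell)$ and using the identities $N(2,\ell)/M(2,\ell)=2^{\ell/2-1}+1$ and $N(4,\ell)/M(4,\ell)=2^{\ell-1}+1$ (which follow by telescoping the products in Proposition~\ref{prelim}), I would then substitute the explicit bounds $A_1(\ell,e)\leq \binom{3\ell}{e}$, $A_2(\ell,e)\leq \binom{\ell}{e/2}3^{e/2}$ (supported on $4\mid e$), and $A_3(\ell,e)\leq \binom{\ell}{e/3}$ (supported on $6\mid e$) already derived in the preceding discussion, yielding exactly the inequality in the theorem statement.

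There is no serious obstacle: the construction, the mass formulas, and the case-by-case enumeration of codewords are all supplied by the preceding material. The only step requiring care is the bookkeeping in Cases~2 and~3, ensuring that the factor for the trivial component uses $N$ rather than $M$ (since the zero vector lies in every self-dual code) and that we do not double-count the zero codeword, which is automatic once we restrict to $\mathbf{c}\neq\mathbf{0}_{3\ell}$.
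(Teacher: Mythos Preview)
Your proposal is correct and follows essentially the same argument as the paper: the paper's proof is just a two-sentence remark that multiplying the stated inequality through by $M(2,\ell)M(4,\ell)$ and invoking the already-established bounds on $A_i(\ell,e)$ shows that the number of bad self-dual QC codes is strictly less than the total $N(2,\ell)N(4,\ell)$. Your write-up in fact spells out the incidence-counting and the telescoping ratios $N(2,\ell)/M(2,\ell)=2^{\ell/2-1}+1$, $N(4,\ell)/M(4,\ell)=2^{\ell-1}+1$ more explicitly than the paper does.
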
 
\begin{proof}
Multiplying both sides of the inequality by $M(2, \ell) M(4, \ell)$, and applying the above upper bounds for $A_i (\ell ,d)$ ($i=1,2,3$), we see that the inequality in Theorem \ref{finite} implies that the number of self-dual binary QC codes of length $3\ell$ and index $\ell$ with minimum distance $<d$ is strictly less than the total number of self-dual binary QC codes of length $3 \ell$ and index $\ell$.
\end{proof}

If we are interested only in Type II QC codes, using Proposition \ref{ptype2}, we easily see that the number of Type II binary QC codes of length $3 \ell$ and index $\ell$ with minimum weight is $<d$  is bounded above by
\[
\sum_{\substack{e<d \\ e\, \equiv\, 0 \bmod{4}}} \bigl(A_1(\ell , e) S(2, \ell) M(4 , \ell) + A_2 (\ell , e) T(2, \ell) M(4, \ell) + A_3 (\ell, e) S(2, \ell) N(4, \ell) \bigr).
\]

Using an argument similar to that for Theorem \ref{finite}, we obtain the following result.
\begin{thm}[{\cite[Theorem 3.2]{LS2}}]\label{ftype2}
Let $\ell$ be divisible by $8$ and let $d$ be the largest multiple of $4$ such that
\begin{align*}
\sum_{\substack{e<d\\ e\, \equiv\, 0 \bmod{4}}} \binom{3 \ell}{e} + (2^{\frac{\ell}{2} -2} +1)\Biggl(& \sum_{\substack{e<d\\ e\, \equiv\, 0 \bmod{4}}} \binom{\ell}{e/2} 3^{e/2}\Biggr)  + (2^{\ell -1} +1)\Biggl(\sum_{\substack{e<d\\ e\, \equiv\, 0 \bmod{12}}} \binom{\ell}{e/3}\Biggr) \\[7pt]
&< (2^{\frac{\ell}{2} -2} +1) (2^{\ell -1} +1).
\end{align*}
Then there exists a Type II binary QC code of length $3 \ell$ and index $\ell$ with minimum distance at least $d$.
\end{thm}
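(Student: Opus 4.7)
The plan is to run the same averaging/mass argument used in the proof of Theorem \ref{finite}, but with the binary self-dual mass formulas $N(2,\ell), M(2,\ell)$ replaced everywhere by the Type~II versions $T(2,\ell), S(2,\ell)$ from Proposition \ref{ptype2}. The $\F_4$-side is unchanged: $C_2$ is still required only to be Hermitian self-dual, so $N(4,\ell)$ and $M(4,\ell)$ still appear. I would first verify once more that, via the construction in (\ref{cubic}), a binary QC code $C$ coming from the pair $(C_1,C_2)$ is Type~II exactly when $C_1$ is Type~II and $C_2$ is Hermitian self-dual; the total count of such QC codes is $T(2,\ell)N(4,\ell)$.

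The next step is to redo the case analysis of a nonzero codeword $\mathbf{c}\in C$ corresponding to $(\mathbf{c}_1,\mathbf{c}_2)$ under the Type~II restriction. Every weight $e$ now must satisfy $4\mid e$. In Case~1 ($\mathbf{c}_1\neq\mathbf{0}$, $\mathbf{c}_2\neq\mathbf{0}$) we keep the crude bound $A_1(\ell,e)\leq\binom{3\ell}{e}$ with $4\mid e$. In Case~2 ($\mathbf{c}_1=\mathbf{0}$) the codeword has weight $2\,\mathrm{wt}_{\F_4}(\mathbf{c}_2)$, and since Hermitian self-dual $\F_4$-codes have even weights this again gives $4\mid e$ with $A_2(\ell,e)\leq\binom{\ell}{e/2}3^{e/2}$. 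In Case~3 ($\mathbf{c}_2=\mathbf{0}$) the codeword is $(\mathbf{z},\mathbf{z},\mathbf{z})$ of weight $3\,\mathrm{wt}(\mathbf{z})$, and the Type~II condition on $C_1$ forces $4\mid\mathrm{wt}(\mathbf{z})$, so now $12\mid e$ with $A_3(\ell,e)\leq\binom{\ell}{e/3}$. This adjustment from $6\mid e$ (in the proof of Theorem~\ref{finite}) to $12\mid e$ is the one genuinely new combinatorial point and is what produces the third sum in the theorem's hypothesis.

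Finally, I would bound the number of Type~II QC codes with minimum distance $<d$ by splitting over the three cases: a Case~1 codeword lies in $S(2,\ell)M(4,\ell)$ such codes, a Case~2 codeword lies in $T(2,\ell)M(4,\ell)$ of them, and a Case~3 codeword lies in $S(2,\ell)N(4,\ell)$ of them. Summing yields an upper bound
\[
\sum_{\substack{e<d\\ 4\mid e}}\!\!\bigl(A_1(\ell,e)S(2,\ell)M(4,\ell)+A_2(\ell,e)T(2,\ell)M(4,\ell)+A_3(\ell,e)S(2,\ell)N(4,\ell)\bigr).
\]
Dividing through by $S(2,\ell)M(4,\ell)$ and using Propositions~\ref{prelim} and~\ref{ptype2} to evaluate $T(2,\ell)/S(2,\ell)=2^{\ell/2-2}+1$ and $N(4,\ell)/M(4,\ell)=2^{\ell-1}+1$, together with $T(2,\ell)N(4,\ell)/(S(2,\ell)M(4,\ell))=(2^{\ell/2-2}+1)(2^{\ell-1}+1)$ for the total number of Type~II QC codes, the displayed inequality in the theorem is exactly the statement that the count with minimum distance $<d$ is strictly smaller than the total count, so at least one such code with $d(C)\geq d$ must exist. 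The only step that requires real care, rather than bookkeeping, is the Case~3 congruence: once $12\mid e$ is pinned down from the Type~II constraint on $C_1$, everything else is a direct transcription of the proof of Theorem~\ref{finite}.
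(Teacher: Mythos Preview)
Your proposal is correct and follows exactly the approach the paper takes: it explicitly states the Type~II upper bound
\[
\sum_{\substack{e<d \\ e\, \equiv\, 0 \bmod{4}}} \bigl(A_1(\ell , e) S(2, \ell) M(4 , \ell) + A_2 (\ell , e) T(2, \ell) M(4, \ell) + A_3 (\ell, e) S(2, \ell) N(4, \ell) \bigr)
\]
and then says the result follows by the argument of Theorem~\ref{finite}. Your observation that the Type~II constraint on $C_1$ sharpens the Case~3 congruence from $6\mid e$ to $12\mid e$ is precisely the point that accounts for the third sum in the hypothesis, and the ratios $T(2,\ell)/S(2,\ell)=2^{\ell/2-2}+1$ and $N(4,\ell)/M(4,\ell)=2^{\ell-1}+1$ are correct.
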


We are now in a position to state and prove the asymptotic version of Theorems \ref{finite} and \ref{ftype2}.
\begin{thm}[{\cite[Theorem 4.1]{LS2}}] There exists an infinite family of binary self-dual QC codes $\C_i$ of length $3\ell_i$ and of distance $d_i$ with $\displaystyle\lim_{i\rightarrow\infty}\ell_i=\infty$ such that $\displaystyle\delta=\liminf_{i\rightarrow\infty}\frac{d_i}{3\ell_i}$ exists and is bounded below by
\[
\delta \ge H_2^{-1}(1/2)=0.110\cdots.
\]
\end{thm}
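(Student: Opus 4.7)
The plan is to extract the asymptotic statement from the finite-length existence result of Theorem~\ref{finite} by applying standard entropy bounds to the three sums on the left-hand side of its displayed inequality. Fix any $\delta_0 \in (0, H_2^{-1}(1/2))$ and, for each even positive integer $\ell$, let $d_\ell$ be the largest even integer not exceeding $3\ell\delta_0$. If the hypothesis of Theorem~\ref{finite} is verified for $d = d_\ell$ and all $\ell$ sufficiently large, it supplies self-dual binary QC codes of length $3\ell$, index $\ell$, and minimum distance at least $d_\ell \sim 3\ell\delta_0$. Combining this with a sequence $\delta_0^{(j)} \uparrow H_2^{-1}(1/2)$ along a diagonal choice $\ell_j \to \infty$ yields a single infinite family $(C_j)$ with $\liminf_{j\to\infty} d_j/(3\ell_j) \ge H_2^{-1}(1/2)$.

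The right-hand side of Theorem~\ref{finite} equals $(2^{\ell/2-1}+1)(2^{\ell-1}+1) \sim 2^{3\ell/2}$ as $\ell \to \infty$, so it suffices to show that each of the three LHS sums grows strictly slower than $2^{3\ell/2}$. I would use the classical entropy bound
\[
\sum_{k \le yN} \binom{N}{k}(q-1)^k \;\le\; q^{NH_q(y)} \qquad \bigl(0 < y \le (q-1)/q\bigr),
\]
recorded in \cite[Ch.~10]{MS77}. The first sum is bounded by $2^{3\ell H_2(\delta_0)}$, which is $o(2^{3\ell/2})$ exactly when $H_2(\delta_0) < 1/2$. For the third sum, substituting $g = e/3 \le \ell\delta_0$ gives a bound of $2^{\ell(1 + H_2(\delta_0))}$, imposing the same condition.

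The middle sum $(2^{\ell/2-1}+1)\sum_{e<d,\,4\mid e}\binom{\ell}{e/2} 3^{e/2}$ is the subtle one because of the factor $3^{e/2}$. Substituting $f = e/2 \le 3\ell\delta_0/2$ and invoking the $q = 4$ case of the entropy bound yields $\sum_{f \le 3\ell\delta_0/2}\binom{\ell}{f} 3^f \le 2^{2\ell H_4(3\delta_0/2)}$, so this term is of order $2^{\ell/2 + 2\ell H_4(3\delta_0/2)}$ and the condition becomes $H_4(3\delta_0/2) < 1/2$. Via the identity $2H_4(y) = y\log_2 3 + H_2(y)$, this rewrites as $H_2(3\delta_0/2) + (3\delta_0/2)\log_2 3 < 1$. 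The main obstacle is to verify that this is no more restrictive than $H_2(\delta_0) < 1/2$, i.e.\ that it already holds at the boundary value $\delta_0 = H_2^{-1}(1/2) \approx 0.110$. A short numerical check confirms this: with $3\delta_0/2 \approx 0.165$, one has $H_2(0.165) \approx 0.646$ and $0.165\log_2 3 \approx 0.262$, giving a sum $\approx 0.908 < 1$. By continuity the strict inequality then holds throughout $(0, H_2^{-1}(1/2)]$, so the middle sum does not bind.

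Combining these three estimates, for every $\delta_0 < H_2^{-1}(1/2)$ the hypothesis of Theorem~\ref{finite} is satisfied for all sufficiently large even $\ell$, producing the desired codes. The limiting argument outlined in the first paragraph then gives the asserted lower bound $\liminf d_j/(3\ell_j) \ge H_2^{-1}(1/2)$, completing the proof.
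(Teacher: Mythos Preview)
Your proof is correct and follows essentially the same approach as the paper: both extract the asymptotic statement from Theorem~\ref{finite} by comparing each of the three left-hand sums against the right-hand side $\sim 2^{3\ell/2}$ via entropy estimates, and both observe that the middle sum is not the binding constraint. The only cosmetic differences are that you package the middle-sum estimate via $H_4$ and a numerical check at the boundary, whereas the paper writes it as $f(t)=0.5+t\log_2 3 + H_2(t)$ and records the critical value $\delta\approx 0.1762$; and you make the passage to the limit explicit via a diagonal argument, which the paper leaves implicit.
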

\begin{proof}
The right-hand-side of the inequality of Theorem \ref{finite} is plainly of the order of $2^{3\ell/2}$ for large $\ell$. We compare this in turn to each of the three summands on the left-hand side (at the price of a more stringent inequality, congruence conditions on the summation range are neglected). By \cite[Chapter 10, Corollary 9]{MS77}, for large $\ell$ (with $\mu =\delta$ and $n=\ell$), the first and third summands are of order $2^{3\ell H_2(\delta)}$ and $2^{\ell +\ell H_2(\delta)}$, respectively. They both are of the order of the RHS for $H_2(\delta)=1/2$. By \cite[Chapter 10, Lemma 7]{MS77}, for large $\ell$ (with $\lambda=\delta$ and $n=\ell$), the second summand is of order $2^{\ell f(3\delta/2)}$ for $f(t):=0.5+t\log_2(3)+H_2(t)$, which is of the order of the right-hand-side for $$\delta=0.1762\cdots,$$ a value $>H_2^{-1}(1/2)$.
\end{proof}

Similarly, for Type II codes, we have the following.
\begin{thm} There exists an infinite family of Type II binary QC codes $C_i$ of length $3\ell_i$ and of distance $d_i$  with $\displaystyle\lim_{i\rightarrow\infty}\ell_i=\infty$ such that $\displaystyle\delta=\liminf_{i\rightarrow\infty}\frac{d_i}{3\ell_i}$ exists and is bounded below by
\[
\delta \ge H_2^{-1}(1/2)=0.110\cdots .
\]
\end{thm}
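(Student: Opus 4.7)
The plan is to adapt the argument of the previous theorem (the self-dual case) almost verbatim, replacing Theorem~\ref{finite} by its Type II analogue Theorem~\ref{ftype2}. The non-asymptotic existence statement in Theorem~\ref{ftype2} produces, for every $\ell$ divisible by $8$, a Type~II binary QC code of length $3\ell$ and index $\ell$ with minimum distance at least the largest multiple-of-$4$ integer $d$ satisfying the displayed inequality. The task therefore reduces to showing that the inequality in Theorem~\ref{ftype2} is eventually satisfied for $d = \lfloor 3\delta\ell\rfloor$ whenever $\delta < H_2^{-1}(1/2)$, and then choosing $\ell_i = 8i$.

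For the asymptotic comparison, I would first observe that the right-hand side $(2^{\ell/2-2}+1)(2^{\ell-1}+1)$ is of order $2^{3\ell/2}$ for large $\ell$, identical (up to a bounded factor) to the right-hand side of the self-dual inequality. Next I would drop the congruence conditions $e \equiv 0 \pmod 4$, $e \equiv 0 \pmod 4$, and $e \equiv 0 \pmod{12}$ on the three summands of the left-hand side; this only makes each summand larger, so the sufficient inequality it yields is stronger than what Theorem~\ref{ftype2} requires. I would then estimate each of the three (enlarged) summands exactly as in the proof of the previous theorem, using \cite[Chapter~10, Corollary~9 and Lemma~7]{MS77}:
\begin{itemize}[leftmargin=*]
\item $\displaystyle\sum_{e\le 3\delta\ell}\binom{3\ell}{e}$ is of order $2^{3\ell H_2(\delta)}$;
\item $(2^{\ell/2-2}+1)\displaystyle\sum_{e\le 3\delta\ell}\binom{\ell}{e/2}3^{e/2}$ is of order $2^{\ell f(3\delta/2)}$, where $f(t)=\tfrac12+t\log_2(3)+H_2(t)$;
\item $(2^{\ell-1}+1)\displaystyle\sum_{e\le 3\delta\ell}\binom{\ell}{e/3}$ is of order $2^{\ell(1+H_2(\delta))}$.
\end{itemize}

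Comparing each summand to the right-hand side $2^{3\ell/2}$ produces three sufficient conditions on $\delta$: the first and the third both reduce to $H_2(\delta)<1/2$, i.e.\ $\delta < H_2^{-1}(1/2)=0.110\cdots$, while the second reduces to $f(3\delta/2)<3/2$, which by the computation recorded in the preceding proof is satisfied for all $\delta<0.1762\cdots$, a strictly weaker restriction. Hence $\delta < H_2^{-1}(1/2)$ is the binding constraint, and for any such $\delta$ the inequality of Theorem~\ref{ftype2} holds for all sufficiently large $\ell$ divisible by $8$. Taking $\ell_i = 8i$ and $d_i$ the largest multiple of $4$ with $d_i \ge 3\delta\ell_i$ (which exists for $i$ large) produces the required infinite family $(C_i)$ of Type~II binary QC codes with $\liminf_{i\to\infty} d_i/(3\ell_i) \ge \delta$; letting $\delta \uparrow H_2^{-1}(1/2)$ through a countable sequence and passing to a diagonal family gives the stated bound.

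The only real step that requires attention is the handling of the congruence conditions $e\equiv 0\pmod 4$ and $e\equiv 0\pmod{12}$ that restrict the summation ranges in Theorem~\ref{ftype2}: one must verify that ignoring them is permissible, which it is because relaxing the summation range only enlarges the left-hand side, so establishing the stronger inequality without congruence restrictions still implies what Theorem~\ref{ftype2} asks for. Beyond this, the argument is purely a mechanical repetition of the self-dual asymptotic proof, with the minor shift $2^{\ell/2-1}\mapsto 2^{\ell/2-2}$ on the right absorbed into the $O(\cdot)$.
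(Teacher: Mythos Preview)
Your proposal is correct and follows essentially the same route as the paper: the paper's entire proof reads ``Since we neglected the congruence conditions in the preceding analysis, the calculations are exactly the same but using Theorem~\ref{ftype2},'' and what you have written is precisely a spelled-out version of that sentence. Your explicit justification that dropping the congruence restrictions only strengthens the inequality, and your observation that the shift $2^{\ell/2-1}\to 2^{\ell/2-2}$ is asymptotically irrelevant, are exactly the points the paper leaves implicit.
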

\begin{proof}
Since we neglected the congruence conditions in the preceding analysis, the calculations are exactly the same but using Theorem~\ref{ftype2}.
\end{proof}

\subsection{Complementary-Dual QC Codes Are Good}
 
It is known that both Euclidean and Hermitian (using the inner product \eqref{hermprod}) LCD codes are asymptotically good; see \cite[Propositions 2 and 3]{M} and \cite[Theorem 3.6]{GOS}, respectively. Recall that the second part of Corollary \ref{CDinstance} suggests an easy construction of QCCD codes from LCD codes. In \cite{GOS}, this idea was used to show the existence of good long QCCD codes, as shown in the next result, together with the help of Theorem \ref{jensen}.

\begin{thm}[{\cite[Theorems 3.3 and 3.7]{GOS}}] \label{asymptotics-LCD}
Let $q$ be a power of a prime and let $m\geq 2$ be relatively prime to $q$. Then there exists an asymptotically good sequence of $q$-ary QCCD codes.
\end{thm}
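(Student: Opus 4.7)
The plan is to combine Corollary~\ref{CDinstance}(ii) with the Jensen bound (Theorem~\ref{jensen}) and the already-known asymptotic goodness of Euclidean and Hermitian LCD codes over arbitrary finite fields. Factor $x^m - 1$ as in~(\ref{irreducibles-2}) into self-reciprocal irreducibles $g_1,\ldots,g_n$ and reciprocal pairs $(h_t, h_t^*)$ for $1\le t \le p$, with associated constituent fields $\G_i$ and $\HH_t'=\HH_t''$. For each growing parameter $\ell$, I would build a QC code $C^{(\ell)}$ of length $m\ell$ and index $\ell$ by specifying its CRT constituents: pick an LCD code $C_i^{(\ell)}\subseteq\G_i^\ell$ with the inner product (Euclidean or Hermitian) prescribed in Section~\ref{SD-LCD} for each $i$, and a Euclidean LCD code $D_t^{(\ell)}\subseteq(\HH_t')^\ell$ for each $t$, then set $C_t'^{(\ell)}=C_t''^{(\ell)}=D_t^{(\ell)}$. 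By Corollary~\ref{CDinstance}(ii) the assembled QC code $C^{(\ell)}$ is LCD.

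To lower-bound $d(C^{(\ell)})$ I would invoke the Jensen bound. Recall that the CRT constituents coincide with the outer codes in the concatenated description of Theorem~\ref{Jensen's thm}. Relabel the nonzero constituents as $C_1^{(\ell)},\ldots,C_g^{(\ell)}$ in order of increasing minimum distance; then
\[
d(C^{(\ell)}) \;\ge\; \min_{1\le e\le g}\Bigl\{ d(C_e^{(\ell)})\cdot d\bigl(\langle\theta_1\rangle\oplus\cdots\oplus\langle\theta_e\rangle\bigr)\Bigr\}.
\]
The key observation is that each $\langle\theta_1\rangle\oplus\cdots\oplus\langle\theta_e\rangle$ is a cyclic code over $\F_q$ of the fixed length $m$; its minimum distance depends on $m$ and $q$ only and is at least $1$. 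Hence $\kappa := \min_e d(\langle\theta_1\rangle\oplus\cdots\oplus\langle\theta_e\rangle)$ is a positive integer constant independent of $\ell$.

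By the asymptotic goodness of Euclidean LCD codes \cite[Propositions~2 and 3]{M} and of Hermitian LCD codes \cite[Theorem~3.6]{GOS}, there exist, over each of the relevant fields $\G_i$ and $\HH_t'$ with the relevant inner product, sequences of LCD codes realising positive relative distance $\delta_e>0$ and positive relative rate $\mathcal{R}_e>0$ as $\ell\to\infty$. Choosing the constituents from such families, one obtains
\[
\liminf_{\ell\to\infty}\frac{d(C^{(\ell)})}{m\ell} \;\ge\; \frac{\kappa}{m}\min_e \delta_e \;>\;0,
\]
while $\dim_{\F_q}C^{(\ell)}=\sum_i(\deg g_i)\dim_{\G_i}C_i^{(\ell)} + 2\sum_t(\deg h_t)\dim_{\HH_t'}D_t^{(\ell)}$ divided by $m\ell$ is a convex combination of the $\mathcal{R}_e$'s, so $\liminf_\ell\dim_{\F_q}C^{(\ell)}/(m\ell)\ge\min_e \mathcal{R}_e>0$. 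Thus $(C^{(\ell)})_\ell$ is an asymptotically good sequence of QCCD codes.

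The main obstacle is the matching of inner products at the constituent level: each $\G_i$ carries either a Euclidean or a Hermitian inner product depending on whether the self-reciprocal factor $g_i$ produces an odd or even extension degree, so one must quote the goodness of LCD codes in both the Euclidean and Hermitian settings and over arbitrary base fields. The two references cited above cover exactly these cases, so once the Jensen bound reduces the problem to the constituent level, the asymptotic estimate follows.
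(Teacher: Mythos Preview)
Your argument is correct, but it is organised differently from the paper's proof, and there is one point where your phrasing is slightly loose.

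\textbf{Difference in strategy.} The paper does not populate all the CRT slots. Instead it splits into two cases according to whether $1$ and $-1$ lie in the same $q$-cyclotomic coset modulo $m$. In the first case it picks a single reciprocal pair $(h,h^*)$, sets $C_t'=C_t''=C_i$ to be a member of an asymptotically good Euclidean LCD family over $\HH_t'$, and puts \emph{every other constituent equal to zero}; in the second case it picks one self-reciprocal factor $g\neq x-1$ and uses a single Hermitian LCD constituent, again zeroing out the rest. Thus the paper works with only one or two nonzero outer codes and applies the Jensen bound to that very short concatenation. Your approach instead fills \emph{all} $n+2p$ constituents with good LCD codes simultaneously. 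This avoids the case distinction and is arguably more uniform, at the cost of having to quote the existence of good Euclidean \emph{and} Hermitian LCD families over each of the fields $\G_i$ and $\HH_t'$ at once; as you note, the two cited results cover exactly that.

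\textbf{The point about $\kappa$.} You define $\kappa=\min_e d(\langle\theta_1\rangle\oplus\cdots\oplus\langle\theta_e\rangle)$ and assert it is independent of $\ell$. Strictly speaking the labelling $\theta_1,\ldots,\theta_g$ is fixed by the ordering $d(C_1^{(\ell)})\le\cdots\le d(C_g^{(\ell)})$, which may change with $\ell$. What is true, and sufficient for your purpose, is that the set of cyclic codes $\bigoplus_{i\in S}\langle\theta_i\rangle$ as $S$ ranges over nonempty subsets of $\{1,\ldots,s\}$ is a finite set depending only on $m$ and $q$, so the minimum of their distances is a fixed positive integer (in fact equal to $1$, since $S=\{1,\ldots,s\}$ gives $R\cong\F_q^m$). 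Bounding each inner-code factor below by this universal constant still yields $d(C^{(\ell)})\ge \min_e d(C_e^{(\ell)})$, and hence $\liminf d(C^{(\ell)})/(m\ell)\ge m^{-1}\min_e\delta_e>0$, which is what you need.
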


\begin{proof}Let $\xi$ be a primitive $m^{\mathrm{th}}$ root of unity over $\mathbb{F}_q$. There are two possibilities.
\begin{enumerate}
\item[(1)] $1$ and $-1$ are not in the same $q$-cyclotomic coset modulo $m$.
\item[(2)] $1$ and $-1$ are in the same $q$-cyclotomic coset modulo $m$.
\end{enumerate}
In case (1), $\xi$ and $\xi^{-1}$ have distinct minimal polynomials $h'(x)$ and $h''(x)$, respectively, over $\mathbb{F}_q$. Let $\mathbb{H}'=\mathbb{F}_q[x]/\langle h'(x) \rangle$ and $\mathbb{H}''=\mathbb{F}_q[x]/\langle h''(x) \rangle$. Recall that these fields are equal: $\mathbb{H}'=\mathbb{F}_q(\xi)=\mathbb{F}_q(\xi^{-1})=\mathbb{H}''$. We denote both by $\mathbb{H}$. Let $\theta'$ and $\theta''$ denote the primitive idempotents corresponding to the $q$-ary length $m$ minimal cyclic codes with check polynomials $h'(x)$ and $h''(x)$, respectively. Let $(C_i)_{i\geq 1}$ be an asymptotically good sequence of (Euclidean) LCD codes over $\mathbb{H}$ (such a sequence exists by \cite{M}), where each $C_i$ has parameters $[\ell_i,k_i,d_i]$. For all $i\geq 1$, define the $q$-ary QC code $D_i$ as
\[
D_i:=C_i\oplus C_i=\bigl( \langle \theta' \rangle \Box C_i
\bigr)\oplus \bigl( \langle \theta'' \rangle \Box C_i \bigr) \subset \mathbb{H}^{\ell_i}\oplus \mathbb{H}^{\ell_i}.
\]
By Corollary \ref{CDinstance}, each $D_i$ with $i\geq 1$ is a QCCD code of index $\ell_i$. If $e:=[\mathbb{H}:\mathbb{F}_q]$, then the length and the $\mathbb{F}_q$-dimension of $D_i$ are $m\ell_i$ and $2ek_i$, respectively. By the Jensen Bound of Theorem \ref{jensen}, the minimum distance of $D_i$ satisfies
\[
d(D_i)\geq \min\,\{d(\langle \theta' \rangle)d_i,d(\langle \theta' \rangle \oplus \langle \theta'' \rangle)d_i \}\geq d(\langle \theta' \rangle \oplus \langle \theta'' \rangle)d_i.
\]
For the sequence of QCCD codes $(D_i)_{i\geq 1}$, the relative rate is
\[
\mathcal{R}=\liminf_{i\to \infty} \frac{2ek_i}{m\ell_i}=\frac{2e}{m}\liminf_{i\to \infty} \frac{k_i}{\ell_i},
\]
and this quantity is positive since $(C_i)_{i\geq 1}$ is asymptotically good. For the relative distance, we have
\[
\delta=\liminf_{i\to \infty} \frac{d(D_i)}{m\ell_i}\geq d(\langle \theta' \rangle \oplus \langle \theta'' \rangle) \liminf_{i\to \infty} \frac{d_i}{\ell_i}.
\]
Note again that $\delta$ is positive since $(C_i)_{i\geq 1}$ is asymptotically good.

In case (2), we clearly have that $n$ and $-n$ are always in the same $q$-cyclotomic coset modulo $m$. Therefore every irreducible factor of $x^m-1$ over $\mathbb{F}_q$ is self-reciprocal. As $m\geq 2$, we can choose such a factor $g(x)\ne x-1$. Let $\mathbb{G}=\mathbb{F}_q[x]/\langle g(x) \rangle$. We denote by $\theta$ the primitive idempotent corresponding to the $q$-ary minimal cyclic code of length $m$ with check polynomial $g(x)$. Let $(C_i)_{i\geq 1}$ be an asymptotically good sequence of Hermitian LCD codes over $\mathbb{G}$. Such a sequence exists by \cite[Theorem 3.6]{GOS}. Assume that each $C_i$ has parameters $[\ell_i,k_i,d_i]$. For each $i\geq 1$, define the $q$-ary QC code $D_i$ as the QC code with one constituent:
\[
D_i:=\langle \theta \rangle \Box C_i.
\]
If $e:=[\mathbb{G}:\mathbb{F}_q]$, then the length and the dimension of $D_i$ are $m\ell_i$ and $ek_i$, respectively. By the Jensen Bound of Theorem \ref{jensen}, we have
\[
d(D_i)\geq d(\langle \theta \rangle)d(C_i).
\]
As in part (a) we conclude that $(D_i)_{i\geq 1}$ is asymptotically good since $(C_i)_{i\geq 1}$ is asymptotically good.
\end{proof}

\section{Connection to Convolutional Codes}
An $(\ell,k)$ convolutional code $C$ over $\F_q$ is defined as a rank $k$ $\Fq[x]$-submodule of $\Fq[x]^{\ell}$, which is necessarily a free module since $\F_q[x]$ is a principal ideal domain. The weight of a polynomial $c(x)\in \F_q[x]$ is defined as the number of nonzero terms in $c(x)$ and the weight of a codeword $\mathbf{c}(x)=(c_0(x),\ldots,c_{\ell-1}(x)) \in C$  is the sum of the weights of its coordinates. The free distance of the convolutional code $d_{\text{free}}(C)$ is the minimum weight among its nonzero codewords.
\begin{rem}\label{note on gen matrix}
An encoder of an $(\ell,k)$ convolutional code $C$ is a $k\times\ell$ matrix $G$ of rank $k$ with entries from $\F_q[x]$. In other words,
\begin{equation*} \label{conv code}
C=\left\{ \left(u_0(x),\ldots ,u_{k-1}(x)\right)G: \
\left(u_0(x),\ldots ,u_{k-1}(x)\right)\in \F_q[x]^k \right\}.
\end{equation*}
\end{rem}

\begin{rem}\label{note on conv}
Note that a convolutional code is in general defined as an $\F_q(x)$-submodule (subspace) of $\F_q(x)^\ell$. However, this leads to codewords with rational coordinates and infinite weight. From the practical point of view, there is no reason to use this as the definition (see \cite{GRS,L}). Note that even if $C$ is defined as a subspace of $\F_q(x)^\ell$, it has an encoder which can be obtained by clearing off the denominators of all the entries in any encoder. Moreover, it is usually assumed that $G$ is noncatastrophic in the sense that a finite weight codeword $\mathbf{c}(x)\in C$ can only be produced from a finite weight information word $\mathbf{u}(x)$. In other words, an encoder $G$ is said to be noncatastrophic if, for any $\mathbf{u}(x) \in \F_q(x)^k$, $\mathbf{u}(x)G$ has finite weight implies that $\mathbf{u}(x)$ also has finite weight. Hence, with a  noncatastrophic encoder $G$, all finite weight codewords are covered by the $\F_q[x]$-module structure. Noncatastrophic encoders exist for any
convolutional code (see \cite{McE}).
\end{rem}

Let $ R=\mathbb{F}_q[x]/ \langle x^m-1\rangle$ as before and consider the projection map
\[
{\setlength\arraycolsep{3pt}\begin{array}{rll}
\Psi:\mathbb{F}_q[x] &\longrightarrow& R \\[1ex]
 f(x)& \longmapsto & f'(x):=f(x)\!\!\!\mod \langle x^m-1\rangle.
\end{array}}
\]
It is clear that for a given $(\ell,k)$ convolutional code $C$ and any $m>1$, there is a natural QC code $C'$, of length $m\ell$ and index $\ell$, related to it as shown below. Note that we denote the map from $C$ to $C'$ also by $\Psi$.
\[
{\setlength\arraycolsep{4pt}\begin{array}{rll}
\Psi: C &\longrightarrow& C'\\[1ex]
\mathbf{c}(x)=(c_0(x),\ldots,c_{\ell-1}(x))&\longmapsto& \mathbf{c'}(x)=(c'_0(x),\ldots,c'_{\ell-1}(x)).
\end{array}}
\]

Lally \cite{L} showed that the minimum distance of the QC code $C'$ above is a lower bound on the free distance of the convolutional code $C$.

\begin{thm}[{\cite[Theorem 2]{L}}] \label{distance}
Let $C$ be an $(\ell,k)$ convolutional code over $\mathbb{F}_q$ and let $C'$ be the related QC code in $R^\ell$. Then $d_{\text{free}}(C)\geq d(C')$.
\end{thm}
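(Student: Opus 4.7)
The plan is to exploit the one-line observation that reduction modulo $x^m - 1$ is non-increasing on polynomial weight, and then apply it to a minimum-weight codeword of $C$. Once this monotonicity is in hand, the bound $d_{\text{free}}(C) \geq d(C')$ follows almost immediately from the definition of minimum distance.

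First I would establish the weight inequality. Writing $f(x) = \sum_k a_k x^k \in \F_q[x]$ and using $x^k \equiv x^{k \bmod m} \pmod{x^m - 1}$, we have $\Psi(f(x)) = \sum_{i=0}^{m-1}\bigl(\sum_{k \equiv i \bmod m} a_k\bigr) x^i$. Each nonzero coefficient $a_k$ contributes to exactly one residue class $i \equiv k \pmod{m}$, and the contributions landing in a single class may combine (and even cancel) but can never create new nonzero positions. Hence $\mbox{wt}(\Psi(f(x))) \leq \mbox{wt}(f(x))$, and summing over the $\ell$ coordinates yields $\mbox{wt}(\Psi(\mathbf{c}(x))) \leq \mbox{wt}(\mathbf{c}(x))$ for every $\mathbf{c}(x) \in \F_q[x]^\ell$. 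For $\mathbf{c}(x) \in C$ the image $\Psi(\mathbf{c}(x))$ is automatically a codeword of $C' = \Psi(C)$.

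Next I would apply this to a nonzero codeword $\mathbf{c}(x) \in C$ achieving $\mbox{wt}(\mathbf{c}(x)) = d_{\text{free}}(C)$. Its image $\Psi(\mathbf{c}(x)) \in C'$ has weight at most $d_{\text{free}}(C)$, and provided it is nonzero, the definition of the minimum distance of $C'$ gives $d(C') \leq \mbox{wt}(\Psi(\mathbf{c}(x))) \leq d_{\text{free}}(C)$, which is the sought bound.

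The main obstacle is the degenerate situation $\Psi(\mathbf{c}(x)) = \mathbf{0}$, in which $\mathbf{c}(x)$ lies in the kernel $(x^m-1)\F_q[x]^\ell \cap C$. Here one would invoke the noncatastrophic (indeed basic) encoder hypothesis from Remark~\ref{note on conv}: for such an encoder, divisibility of a codeword by $x^m - 1$ forces the same divisibility on the underlying information word, which allows one either to exchange $\mathbf{c}(x)$ for a minimum-weight codeword with nonzero image, or to conclude that $C' = \{\mathbf{0}\}$ and interpret the bound vacuously. Once this corner is handled, the inequality $d_{\text{free}}(C) \geq d(C')$ follows.
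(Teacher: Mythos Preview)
Your main argument---that $\Psi$ does not increase weight, so any nonzero $\mathbf{c}\in C$ with $\Psi(\mathbf{c})\neq\mathbf{0}$ already gives $\mathrm{wt}(\mathbf{c})\ge\mathrm{wt}(\Psi(\mathbf{c}))\ge d(C')$---is exactly the first case of the paper's proof. The gap is entirely in your treatment of the degenerate case $\Psi(\mathbf{c})=\mathbf{0}$.

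You propose to ``exchange $\mathbf{c}(x)$ for a minimum-weight codeword with nonzero image'', but you do not prove that such a codeword exists, and the basic-encoder assumption does not hand it to you. What that assumption gives is only that $\mathbf{v}:=\mathbf{c}/(x^m-1)$ lies in $C$; it says nothing about $\mathbf{v}$ being of minimum weight. Indeed multiplication by $x^m-1$ can strictly \emph{decrease} weight (e.g.\ $(x^m-1)(1+x^m+x^{2m})=x^{3m}-1$ drops from weight $3$ to weight $2$), so $\mathrm{wt}(\mathbf{v})$ may exceed $d_{\text{free}}(C)$, and the chain $d_{\text{free}}(C)\le\mathrm{wt}(\mathbf{v})\ge d(C')$ points the wrong way.

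The paper closes this case differently, and the key step is one you already have the tools for. Write $\mathbf{c}=(x^m-1)^{\gamma}\mathbf{v}$ with $\gamma$ maximal, so $\mathbf{v}\in C$ (here the basic assumption is used) and $\Psi(\mathbf{v})\neq\mathbf{0}$. Now compare $\mathrm{wt}(\mathbf{c})$ \emph{directly} with $\mathrm{wt}(\Psi(\mathbf{v}))$, bypassing $\mathrm{wt}(\mathbf{v})$ altogether. Using your own residue-class bookkeeping, if in coordinate $j$ one writes $v_j(x)=\sum_i x^i q_{j,i}(x^m)$ and $c_j(x)=\sum_i x^i p_{j,i}(x^m)$ with $p_{j,i}(y)=(y-1)^{\gamma}q_{j,i}(y)$, then every position $(j,i)$ that is nonzero in $\Psi(\mathbf{v})$ has $q_{j,i}(1)\neq 0$, hence $p_{j,i}\neq 0$, contributing at least one term to $\mathbf{c}$. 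Thus $\mathrm{wt}(\mathbf{c})\ge\mathrm{wt}(\Psi(\mathbf{v}))\ge d(C')$, and since $\mathbf{c}$ was an arbitrary nonzero codeword, $d_{\text{free}}(C)\ge d(C')$. The missing idea, then, is not to look for a better preimage but to skip straight from $\mathbf{c}$ to the image of its divided form.
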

\begin{proof} 
Let $\mathbf{c}(x)$ be a codeword in $C$ and set $\mathbf{c'}(x)=\Psi(\mathbf{c}(x)) \in C'$. We consider the two possibilities.
First, if $\mathbf{c'}(x)\not= \mathbf{0}$, then $\mathrm{wt}(\mathbf{c}(x))\geq \mathrm{wt}(\mathbf{c'}(x))$. For the second possibility, suppose $\mathbf{c'}(x)= \mathbf{0}$. Let $\gamma \geq 1$ be the maximal positive integer such that $(x^m-1)^\gamma$ divides each coordinate of $\mathbf{c}(x)$. Write $\mathbf{c}(x)=(x^m-1)^\gamma \left(v_0(x),\ldots ,v_{\ell-1}(x) \right)$ and set $\mathbf{v}(x)=\left(v_0(x),\ldots ,v_{\ell-1}(x) \right)$. Then $\mathbf{v}(x)$ is a codeword of $C$, and for $\mathbf{v'}(x) \in C'\setminus \{\mathbf{0}\}$, we have $\mathrm{wt}(\mathbf{c}(x))\geq \mathrm{wt}(\mathbf{v'}(x))$ by the first possibility. Combining the two possibilities, for any $\mathbf{c}(x)\in C$, there exists a $\mathbf{v'}(x)\in C'$ such that $\mathrm{wt}(\mathbf{c}(x))\geq \mathrm{wt}(\mathbf{c'}(x))$. This proves that $d_{\text{free}}(C)\geq d(C')$.
\end{proof}

\begin{rem}
Note that Lally uses an alternative module description of convolutional and QC codes in \cite{L}. Namely, a basis $\{1,\alpha,\ldots ,\alpha^{\ell-1}\}$ of $\F_{q^\ell}$ over $\F_q$ is fixed and the $\F_q[x]$-modules $\F_q[x]^\ell$ and $\F_{q^\ell}[x]$ are identified via the map $\Phi$ in (\ref{Lal}). With this identification, a length $\ell$ convolutional code is viewed as an $\F_q[x]$-module in $\mathbb{F}_{q^{\ell}}[x]$ and a length $m\ell$, index $\ell$ QC code is viewed as an $\F_q[x]$-module in $\mathbb{F}_{q^{\ell}}[x]/\langle x^m-1\rangle$. However, all of Lally's findings can be translated to the module descriptions that we have been using for convolutional and QC codes and this is how they are presented in Theorem \ref{distance}.
\end{rem}

\section*{Acknowledgements}
The third author thanks Frederic Ezerman, Markus Grassl and Patrick Sol\'e for their valuable comments and discussions. The authors are also grateful to Cary Huffman for his great help in improving this chapter. The second and third authors are supported by NTU Research Grant M4080456.\nobreak

\bibliographystyle{IEEEtranS}
\bibliography{QC-chapter} 

\end{document}